\documentclass[sigconf, nonacm]{acmart}
\usepackage[linesnumbered,ruled,vlined]{algorithm2e}
\newcommand\vldbpagestyle{plain} 
\renewcommand\footnotetextcopyrightpermission[1]{}
\usepackage{tabularx}
\usepackage{array}
\usepackage{booktabs}
\usepackage{amsmath} 
\usepackage{enumitem}
\usepackage{subcaption}
\usepackage{multirow}   
\usepackage{xcolor}
\usepackage[normalem]{ulem}
\usepackage{soul}
\usepackage[most]{tcolorbox}
\usepackage{balance}
\usepackage{mathtools}
\usepackage{pifont}
\usepackage{setspace}

\newcommand{\yin}[1]{\textcolor{black}{#1}}
\definecolor{revRoneColor}{RGB}{176,36,48}
\definecolor{REVRONECOLOR}{RGB}{176,36,48}

\newcommand{\revRoneStart}{}
\newcommand{\reviseend}{}
\long\def\revRone#1{#1}

\long\def\comment#1{}

\newtheorem{example}{Example}
\newif\ifrnsgfullversion
\rnsgfullversiontrue
\newcommand{\rnsgappendixref}[1]{\ifrnsgfullversion Appendix~\ref{#1}\else the full version\fi}
\newcommand\vldbauthors{\authors}
\newcommand\vldbtitle{\shorttitle} 

\newcommand\vldbdoi{XX.XX/XXX.XX}
\newcommand\vldbpages{XXX-XXX}
\newcommand\vldbvolume{14}
\newcommand\vldbissue{1}
\newcommand\vldbyear{2026}

\usepackage{setspace}

\begin{document}
\pagestyle{empty}
\raggedbottom

\title{RNSG: A Range-Aware Graph Index for Efficient Range-Filtered Approximate Nearest Neighbor Search}

\author{Zhiqiu Zou$^1$, Ziqi Yin$^2$, Rong-Hua Li$^1$, Hongchao Qin$^1$, Qiangqiang Dai$^1$, Guoren Wang$^1$}
\affiliation{
\begin{center}
    {$^1$Beijing Institute of Technology, Beijing, China} \quad
    {$^2$Nanyang Technological University, Singapore}~~~~~
\end{center}
}
\email{fall@bit.edu.cn, ziqi003@e.ntu.edu.sg, {lironghuabit, wanggrbit}@126.com, {qhc.neu, qiangd66}@gmail.com}



\begin{abstract}
Range-filtered approximate nearest neighbor (RFANN) search is a fundamental operation in {modern data systems.} 
Given a set of objects, each with a vector 
and a numerical attribute, an RFANN query retrieves the nearest neighbors to a query vector among those objects whose numerical attributes fall within the range \yin{specified by the query}. \yin{Existing} state-of-the-art methods for RFANN search often require constructing multiple range-specific graph indexes to achieve high query performance, which incurs significant indexing overhead. To address this, we first establish a novel graph indexing theory, the range-aware relative neighborhood graph (RRNG), which jointly considers spatial and attribute proximity. We prove that the RRNG satisfies two crucial properties: (1) \textit{monotonic searchability}, which
ensures correct nearest neighbor retrieval via beam search; and (2) \textit{structural heredity}, which guarantees that any range-induced sub-graph remains a valid RRNG, thus enabling efficient search with a single graph index. Based on this theoretical foundation, we propose a new graph index, called RNSG, as a practical solution that approximates RRNG efficiently. We develop fast algorithms for both
constructing the RNSG index and processing RFANN queries with it. Extensive experiments on five real-world datasets show that RNSG achieves significantly higher query performance with a more compact index and lower construction cost than existing state-of-the-art
methods.
\end{abstract}
\maketitle

\vspace*{-1em}
\pagestyle{\vldbpagestyle}
\begingroup\small\noindent\raggedright\textbf{PVLDB Reference Format:}\\
{\vldbauthors. \vldbtitle. PVLDB, \vldbvolume(\vldbissue): \vldbpages, \vldbyear.}\\
{\href{https://doi.org/\vldbdoi}{doi:\vldbdoi}}
\endgroup
\begingroup

\vspace*{-1em}
\renewcommand\thefootnote{}\footnote{\noindent
{This work is licensed under the Creative Commons BY-NC-ND 4.0 International License. Visit \url{https://creativecommons.org/licenses/by-nc-nd/4.0/} to view a copy of this license. For any use beyond those covered by this license, obtain permission by emailing \href{mailto:info@vldb.org}{info@vldb.org}. Copyright is held by the owner/author(s). Publication rights licensed to the VLDB Endowment. \\
\raggedright Proceedings of the VLDB Endowment, Vol. \vldbvolume, No. \vldbissue\ %
ISSN 2150-8097. \\
\href{https://doi.org/\vldbdoi}{doi:\vldbdoi} \\}
}\addtocounter{footnote}{-1}\endgroup

\vspace*{-1em}
\ifdefempty{\vldbavailabilityurl}{}{
\vspace{.3cm}
\begingroup\small\noindent\raggedright\textbf{PVLDB Artifact Availability:}\\
{\revRone{The source code, data, and/or other artifacts have been made available at \url{https://github.com/fallleaves01/R-NSG.git}.}}
\endgroup
}

\vspace*{-0.5em}
\section{Introduction}\label{sec:intro}

Nearest neighbor search in high-dimensional Euclidean spaces is a critical task for applications ranging from information retrieval~\cite{liu2007survey} and database systems~\cite{wang2021milvus,guo2022manu,wei2020analyticdb} to generative artificial intelligence~\cite{zhao2024retrieval} \yin{(RAG)}. However, the curse of dimensionality~\cite{beyer1999nearest,curse1,curse2} poses a fundamental challenge to traditional exact search methods~\yin{\cite{jagadish2005idistance}}, rendering them prohibitively slow for practical use. To overcome this, approximate nearest neighbor (ANN) search has emerged as a practical alternative, trading a marginal loss in accuracy for orders-of-magnitude gains in query efficiency.

Over the past decades, numerous ANN search techniques have been developed, including tree-based~\cite{beygelzimer2006cover,arora2018hd,ram2019revisiting}, hashing-based~\cite{LSHDatarIIM04,wang2014hashing,wang2017survey}, quantization-based~\cite{DBLP:conf/mm/TuncelFR02,gao2024rabitq,ge2013optimized,gong2012iterative,jegou2010product,aguerrebere2023similarity}, and graph-based methods~\cite{fu2017fast,fu2021high,harwood2016fanng,malkov2014approximate,malkov2018efficient,wang2021comprehensive}. Comprehensive benchmarks~\cite{wang2021comprehensive,25SIGMODgraphindexsurvey,liApproximateNearestNeighbor2020,dobson2023scaling} have established graph-based methods as the current state-of-the-art in terms of both query efficiency and accuracy. These techniques index data into a graph where nodes correspond to data objects, and edges are established based on \yin{spatial} proximity. A beam search algorithm~\cite{fu2017fast,wang2017survey,matsui2018survey} is then used for query processing. A key insight is that the top-performing graph indexes, including HNSW~\cite{malkov2018efficient} and NSG~\cite{fu2017fast}, are engineered as approximations of the \yin{Monotonic} Relative Neighborhood Graph (\yin{MRNG})~\cite{fu2017fast}. The \yin{MRNG} supports these methods with its desirable theoretical search guarantees and robust practical performance (see Section~\ref{sec:preliminary} for details).

Recent research has expanded to more complex ANN query variants to better serve real-world applications~\cite{zuo2024serf,gollapudi2023filtered,chronis2025filtered}.
A prominent example is the Range-Filtered Approximate Nearest Neighbor (RFANN) search, which has attracted significant interests~\cite{xu2024irangegraph,zuo2024serf,zhang2025efficient,liang2024unify}. In this setting, each data object comprises a vector and a totally-ordered numeric attribute. An RFANN query takes a query vector and a range constraint on the numeric attribute \yin{as input}, returning the nearest neighbors from the set of objects that satisfy the range filter. The RFANN query supports numerous practical applications. For example, on an e-Commerce platform~\yin{\cite{wei2020analyticdb}}, a user can search for products by a text description (encoded as a vector) while restricting results to a specific price range. Similarly, in Apple's on-device photo management system~\cite{pound2025micronn}, users can retrieve visually similar photos taken within a given date range, where both data images and query images are encoded into vectors to prevent personal information leakage, thereby meeting personalized needs efficiently while preserving data privacy.

Three basic strategies exist for RFANN queries~\cite{wang2021milvus}: pre-filtering, in-filtering, and post-filtering. These approaches share the characteristic of executing searches on a standard ANN index, differing primarily in when the range filter is applied. Specifically: (1) pre-filtering first selects in-range objects, then performs ANN search on this subset~\cite{wang2021milvus}; (2) in-filtering checks the range condition during the ANN search process~\cite{douze2024faiss}, skipping out-of-range objects; and (3) post-filtering conducts a standard ANN search first, then filters the results by range~\cite{li2018design}. The fundamental limitation of all these basic strategies is their inability to adapt efficiently to diverse query workloads with varying range selectivities. Although some hybrid methods attempt to dynamically combine these strategies~\cite{wang2021milvus}, they remain constrained by the same underlying limitations, inevitably yielding suboptimal performance.

Recent state-of-the-art methods directly integrate ANN graph indexes with numeric attributes~\cite{xu2024irangegraph,zuo2024serf,engels2024approximate,liang2024unify}, yet still struggle to balance efficiency and effectiveness. For instance, \cite{zuo2024serf} proposes building a dedicated ANN index for every possible query range. While this achieves high performance, it incurs prohibitive indexing costs. A subsequent compression technique \cite{zuo2024serf} reduces this overhead by constructing only multiple representative indexes, but at the cost of significantly degraded query accuracy. Similarly, \cite{xu2024irangegraph} integrates an ANN graph index with a \textit{segment tree} \cite{de2008computational}, building indexes for a moderate number of ranges; this improves practicality but still entails substantial indexing overhead. 

The fundamental difficulty in balancing index efficiency and query performance originates from a structural mismatch: while existing graph-based ANN indexes are designed based on spatial proximity, range filtering inherently disrupts these spatial relationships. For instance, \yin{MRNG}-inspired indexes organize edges according to spatial neighborhoods, but when a query range filters out intermediate nodes, the remaining graph often suffers from broken connectivity and poor searchability, causing significant performance degradation (see Fig.~\ref{fig:limitation-rng}).

To overcome these limitations, we propose a graph-structural model for RFANN queries: the Range-aware Relative Neighborhood Graph (RRNG). Unlike traditional graph indexes that consider only spatial proximity, the RRNG jointly considers spatial and attribute proximity. This additional attribute awareness leads to two key properties that we analyze in Section~\ref{sec:rrng}: (1) \textit{monotonic searchability}, extending the searchability intuition behind the \yin{MRNG}, and (2) \textit{structural heredity}, under which any range-induced subgraph remains an RRNG on the in-range objects. We further analyze the search and space overhead of RRNG relative to \yin{MRNG}. Taken together, these results suggest that RRNG provides a principled single-index model for RFANN search.

Similar to \yin{MRNG}, constructing the exact RRNG remains computationally expensive\yin{, with a time complexity of $O(n^3)$. }
To bridge theory and practice, we propose the Range-aware Neighborhood Search Graph (RNSG), a practical graph index that efficiently approximates the RRNG. The construction of RNSG proceeds in two stages: (1) building a \textit{local RRNG} for each node by considering both spatial and attribute proximity to form its candidate neighbor set, and (2) merging these local structures to approximate the global RRNG. We develop an efficient algorithm for this construction and prove that the resulting RNSG meets two crucial properties of strong connectivity and structural heredity, ensuring robust performance under any range constraint. Furthermore, we design a corresponding query processing algorithm enhanced by a novel entry node generation technique. In summary, our main contributions are as follows.

\noindent\textbf{Novel Theoretical Results.} We propose a novel RRNG theory for RFANN search. The RRNG not only preserves the monotonic searchability of traditional \yin{MRNG} but also achieves a new property of structural heredity--guaranteeing that any range-induced subgraph remains a valid RRNG, thereby overcoming a fundamental limitation of previous methods. We 
prove that RRNG achieves this with only a logarithmic overhead in both index size and search complexity compared to \yin{MRNG}. To our knowledge, RRNG is the first graph index that meets both monotonic searchability and structural heredity, establishing a theoretical foundation for future RFANN research.

\noindent\textbf{New Practical Algorithms.} Building upon the RRNG theory, we develop RNSG, a practical solution that efficiently approximates the RRNG. We present a fast range-aware pruning technique to accelerate RNSG construction, achieving a time complexity of $O(|C|M_{rnsg})$, where the parameter $|C|$ is a small constant in practice and $M_{rnsg}$ denotes the number of edges of RNSG, ensuring scalability for large datasets. We prove that RNSG satisfies the strong connectivity and structural heredity, guaranteeing robust performance under any query range. Leveraging this index, we also devise an efficient query processing algorithm equipped with a carefully-designed entry node generation technique.

\noindent\textbf{Extensive Experiments.} We conduct extensive experiments on five real-world datasets, evaluating our method against 8 competitive baselines. The results 
{show} that: (1) RNSG significantly outperforms 
existing state-of-the-art methods including iRangeGraph \cite{xu2024irangegraph} and UNIFY \cite{liang2024unify} in query performance. For instance, on the SIFT1M dataset, RNSG achieves 5645 QPS at 0.95 recall—representing a $2\times$ and $4\times$ improvement over iRangeGraph (2805 QPS) and UNIFY (1401 QPS) respectively; (2) RNSG maintains substantially more compact indexes, requiring several times less storage than competitors. For example, on the WIT dataset, RNSG maintains an index size of 0.38 GB, significantly smaller than iRangeGraph (1.5 GB) and UNIFY (8.9 GB). This represents a 3.9× to 23.4× reduction in memory overhead, demonstrating RNSG's exceptional space efficiency; (3) The index construction time of RNSG is substantially lower than iRangeGraph and UNIFY. For instance, on the WIT dataset, RNSG completes index construction in 1349 seconds, while iRangeGraph and UNIFY require 3495 and 11933 seconds respectively, representing a 2.59× to 8.85× speedup. Additional scalability tests confirm that RNSG maintains linear scaling with dataset size, underscoring its practical viability for large-scale applications. 

\vspace{-0.5em}
\section{Preliminaries} \label{sec:preliminary}
Let $D$ be a dataset comprising $n$ objects. Each object $o = (v, a)$ is composed of two features: a $d$-dimensional vector $o.v$, with $d$ typically being hundreds, and a numerical attribute $o.a \in \mathbb{R}$.

\noindent\textbf{Range-Filtered ANN Query (RFANN):} Given a dataset $D$, an RFANN query is parameterized by a tuple $q = \langle v, I, k \rangle$, where $q.v$ is a query vector, $q.I = [a_l, a_r]$ is a numeric range, and $q.k$ is the number of desired results. The query returns $q.k$ objects from $D$ that lie within the range $q.I$ and have the smallest Euclidean distance $\delta(q.v, o.v)$ to the query vector. 

Following prior work \cite{xu2024irangegraph}, we assume without loss of generality that the dataset is sorted in ascending order by the numeric attribute, i.e., $o_i.a \leq o_j.a$ for any $i \leq j$. \yin{For expository clarity, we assume distinct attribute values; duplicates are resolved by enforcing a deterministic ordering based on object IDs, which does not affect the correctness and performance of our method.} Given an RFANN query with range $[a_l, a_r]$, we can first use binary search to identify the smallest index $L$ and the largest index $R$ such that the subsequence $o_L, \dots, o_R$ contains precisely all objects satisfying $a_l \leq o_i.a \leq a_r$. The RFANN query then simplifies to finding the $k$ nearest neighbors to $q.v$ within this contiguous in-range subsequence. 

\noindent\textbf{Graph-based Indexes for ANNS.} 
Graph-based methods construct a proximity graph $G = (V, E)$ during the indexing phase, where each node $x \in V$ corresponds to a data object $o \in D$. At query time, the beam search algorithm is typically employed~\cite{wang2017survey}. Specifically, the beam search starts from an entry node (often the graph's approximate center) and maintains two dynamic sets: a candidate set $\mathcal{S}$ (a min-heap) and a result set $\mathcal{R}$ (a max-heap of size $k$). $\mathcal{R}$ stores the current $k$ nearest neighbors, while $\mathcal{S}$ contains promising candidates for further exploration.

In each iteration, the algorithm expands the candidate with the smallest distance in $\mathcal{S}$. Its unvisited neighbors are then evaluated and added to $\mathcal{S}$. A neighbor is inserted into $\mathcal{R}$ if it is closer to the query than the farthest object currently in $\mathcal{R}$. To balance accuracy and efficiency, a hyperparameter $ef_{\text{search}}$ limits the size of $\mathcal{S}$; when exceeded, the farthest candidate is removed. The search terminates when the minimum distance in $\mathcal{S}$ exceeds the maximum distance in $\mathcal{R}$, at which point $\mathcal{R}$ is returned as the final result.

Most state-of-the-art graph-based methods, including HNSW~\cite{malkov2018efficient} and NSG~\cite{fu2017fast}, are inspired by the {Monotonic} Relative Neighborhood Graph ({MRNG}) introduced in~\cite{fu2017fast}, as defined below: 

\begin{definition}[\revRone{{Monotonic Relative Neighborhood Graph~\cite{fu2017fast}}}] \label{def:rng}
\revRone{
Given a finite dataset \(D\) in a metric space with distance function \(\delta\), define the lune of two distinct objects \(x,y\in D\) as:}
\[
\operatorname{lune}(x,y)=
\{z\in D\setminus\{x,y\}\mid
\delta(x,z)<\delta(x,y),\ \delta(y,z)<\delta(x,y)\}.
\]
\revRone{The \yin{MRNG} is a directed graph \(G=(V,E)\) whose outgoing edges satisfy the recursive condition}
\[
(x\to y) \in E
\Longleftrightarrow
\operatorname{lune}(x,y)=\emptyset
\ \text{or}\
\forall z\in\operatorname{lune}(x,y),\ (x\to z)\notin E .
\]
\end{definition}

\begin{figure}[!t]
\vspace*{-3em}
\begin{center}
\subcaptionbox{\yin{MRNG} for ANN query\label{fig:rng}}{
\includegraphics[width=0.45\columnwidth]{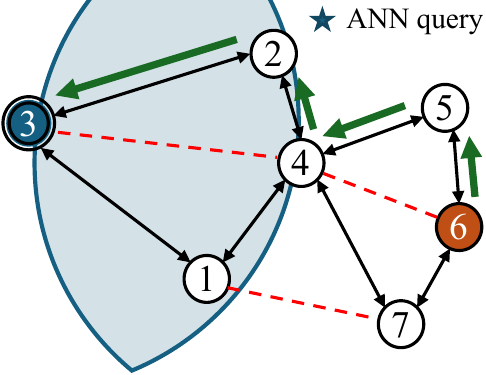}
}
\subcaptionbox{\yin{MRNG} for RFANN query\label{fig:rng-range}}{
\includegraphics[width=0.45\columnwidth]{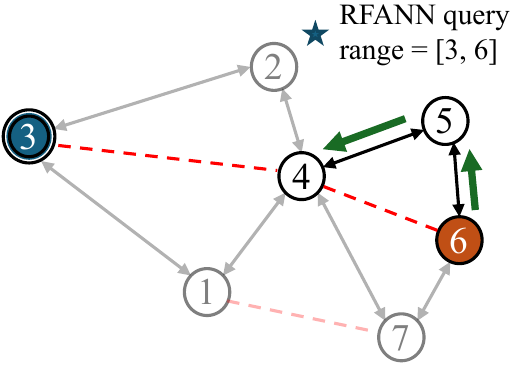}
}
\vspace{-0.5em}
\caption{Illustration of \yin{MRNG} for ANN and RFANN queries (red dashed edges are pruned by \yin{MRNG}), node labels denote numeric attribute values. (a) Under a standard ANN query, the \yin{MRNG} correctly preserves a path to the true nearest neighbor (node 3). (b) Under an RFANN query with a range constraint, the required path is broken, making the true nearest neighbor unreachable from the query node.}

\label{fig:limitation-rng}
 \vspace*{-1.0em}
\end{center}
\end{figure}

\revRone{
As illustrated in Fig.~\ref{fig:limitation-rng}, the direct edge \(4\to 3\) is pruned because object \(2\) lies in the lune of \(4\) and \(3\), while the edge \(4\to 2\) has already been selected.
This pruning strategy is desirable for ANN queries, as it removes unnecessary outgoing edges while preserving connectivity  (e.g., \(4\to 2\to 3\)). Now we provide its formal search-quality and time-complexity guarantees~\cite{fu2017fast}.
}




\begin{lemma}[\cite{fu2017fast}]
\label{lemma:rng-beam}
Given a dataset $D$ of $n$ points in a metric space with distance function $\delta$, let $G(V, E)$ be an \yin{MRNG} constructed on $D$. For any node $x \in V$, its nearest neighbor $y \in V$ can be found by performing a beam search on $G$.
\end{lemma}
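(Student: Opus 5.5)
The plan is to prove that the MRNG is a \emph{monotonic search network}. For any query $q\in V$ and any start node $p\neq q$, we show that $p$ has an out-neighbor strictly closer to $q$. The beam search, even with a beam of size one (greedy search), expands the closest unexpanded candidate in $\mathcal{S}$. It therefore always finds a neighbor that strictly decreases the distance to $q$, so it cannot terminate before reaching $q$ itself. The statement concerns a node $x$ of the graph and its nearest neighbor. Reaching $x$ itself (distance $0$) means that $x$ is found, and its nearest neighbor $y$ is then retrieved as an out-neighbor of $x$. Alternatively, we can run the same argument with target $y$ and show that the distance to $y$ decreases monotonically along the search path.

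The key step is the following claim. Fix a target $t$ and a current node $p\neq t$. Either $(p\to t)\in E$, or there is some $z$ with $(p\to z)\in E$ and $\delta(z,t)<\delta(p,t)$. To prove it, suppose $(p\to t)\notin E$. By the defining equivalence in Definition~\ref{def:rng}, $\operatorname{lune}(p,t)\neq\emptyset$, and moreover some $z\in\operatorname{lune}(p,t)$ satisfies $(p\to z)\in E$. Membership in the lune gives $\delta(t,z)<\delta(p,t)$, which proves the claim.

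One point needs care: the recursive definition must be well founded. I would argue this by processing the candidates $y$ of each $x$ in increasing order of $\delta(x,y)$. Every $z\in\operatorname{lune}(x,y)$ satisfies $\delta(x,z)<\delta(x,y)$, so the membership of $(x\to z)$ has already been decided when $(x\to y)$ is considered. Hence $E$ is uniquely determined, and the equivalence holds exactly as stated.

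Iterating the claim gives a path $p=p_0,p_1,\dots$ along which $\delta(p_i,t)$ strictly decreases. Since $D$ is finite, the distances cannot repeat, so the path reaches $t$ in at most $n$ steps. It remains to connect this with beam search. The beam search keeps the expanded node closest to $t$ among its candidates. Whenever the current best node $u\neq t$, the claim provides an out-neighbor closer to $t$, which is inserted into $\mathcal{R}$ and $\mathcal{S}$. The termination test (minimum of $\mathcal{S}$ exceeds maximum of $\mathcal{R}$) therefore cannot trigger while the best node found is still farther from $t$ than some unexpanded neighbor. The main obstacle is making this last step rigorous when candidates are evicted because the beam is limited to size $ef_{\text{search}}$. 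I would handle it by noting that the closest candidate is never evicted, so the greedy path is always contained in the beam search trajectory. The result then holds for any $ef_{\text{search}}\ge 1$.
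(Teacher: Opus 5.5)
Your argument is correct and is essentially the route the paper takes. The lemma itself is only cited from \cite{fu2017fast}, but the paper's proofs of Theorem~\ref{thm:rrng-monotone} and Corollary~\ref{lem:efficient_search} for the RRNG analogue run the same way: a missing direct edge forces a retained lune witness strictly closer to the target, and best-first search that never discards its current closest candidate keeps improving until it reaches the target.

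One step needs rewording. For $ef_{\text{search}}>1$ the beam search need not follow any particular greedy path, so do not claim the greedy path is contained in the beam trajectory. Instead argue three things about the best node found so far: it is never evicted; it is expanded before the termination test can fire; and on expansion it yields a strictly closer out-neighbor. That neighbor must be previously unvisited, since otherwise it would already have been the best node found.
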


\begin{lemma}[\cite{fu2017fast}]\label{lem:MRNG-complex}
The time complexity of performing the beam search on the \yin{MRNG} is $O(n^{1/d} \log n) \approx O(\log n)$.
\end{lemma}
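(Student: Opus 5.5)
The statement to prove is Lemma~\ref{lem:MRNG-complex}: beam search on the MRNG takes $O(n^{1/d}\log n)$ time. I follow the analysis of~\cite{fu2017fast} and split the cost into two factors. The first is the length of the search path, meaning the number of nodes expanded before reaching the nearest neighbor. The second is the expected out-degree of a node, meaning the number of distance evaluations per expansion. The plan is to show the path length is $O(n^{1/d}\log n/\Delta r)$-type in general and $O(n^{1/d}\log n)$ for uniformly distributed points in a $d$-dimensional ball. I then show the out-degree is bounded by a constant depending only on $d$, and multiply the two. Throughout I use the standard assumptions of~\cite{fu2017fast}: points are drawn uniformly from a bounded region, and distances are in general position.

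\textbf{Degree bound.} For a node $x$, take two out-neighbors $y,z$ with $\delta(x,z)\le\delta(x,y)$. Since $(x\to y)\in E$ survived pruning, either $z\notin\operatorname{lune}(x,y)$ or $(x\to z)\notin E$. Because $(x\to z)\in E$, we get $z\notin\operatorname{lune}(x,y)$. As $\delta(x,z)<\delta(x,y)$, this forces $\delta(y,z)\ge\delta(x,y)$. In the triangle $xyz$, the side $yz$ is then the longest side, so the angle $\angle yxz\ge 60^\circ$. Out-neighbor directions from $x$ are therefore pairwise separated by at least $60^\circ$. A sphere-packing (kissing-number) argument then bounds the out-degree by a constant $c_d$ independent of $n$.

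\textbf{Path length.} By the same lune argument used for Lemma~\ref{lemma:rng-beam}, from any current node $p$ whose distance to the target $q$ has not reached its minimum, some out-neighbor $p'$ satisfies $\delta(p',q)<\delta(p,q)$, so the greedy path is monotone. To bound the number of steps, I would follow~\cite{fu2017fast}. Consider the sequence of distances $r_i=\delta(p_i,q)$. I would lower-bound the decrease $r_i-r_{i+1}$ using the uniform density, through the volume of the lune or ball that must be nonempty around the next step. One natural formulation is that each step shrinks the ball $B(q,r_i)$ by a volume fraction bounded away from zero. The expected number of points in a ball of radius $r$ scales as $nr^d$, which gives a decrease proportional to $r_i\,n^{-1/d}$ in the worst scaling. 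Summing over the radius scales from the domain diameter down to the nearest-neighbor scale $n^{-1/d}$ then gives $O(n^{1/d}\log n)$ steps.

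Multiplying this step count by the $O(c_d)$ work per step yields $O(n^{1/d}\log n)$. Since $d$ is typically in the hundreds, $n^{1/d}\approx 1$, giving the $\approx O(\log n)$ form. I expect the path-length step to be the main obstacle. It needs a careful probabilistic estimate of how much the distance to $q$ must drop at each step, and this relies on the uniform-distribution assumption rather than on the graph definition alone. I would cite~\cite{fu2017fast} for the precise constants instead of rederiving them.
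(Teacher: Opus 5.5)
The paper does not prove this lemma; it imports it from \cite{fu2017fast}. Its RRNG analysis (Lemma~\ref{thm:irng-exp-length} and the proof of Theorem~\ref{thm:search-complex}) shows how it reads the result: the MSNET bound $O(n^{1/d}\log n/\Delta r)$ on the expected monotone-path length, times the expected out-degree (a constant for MRNG), with $\Delta r$ then treated as a model constant. Your decomposition is the same, and your degree bound is correct. For out-neighbors $y,z$ of $x$ with $\delta(x,z)<\delta(x,y)$, the MRNG rule forces $z\notin\operatorname{lune}(x,y)$. Hence $\delta(y,z)\ge\delta(x,y)$, hence $\angle yxz\ge 60^\circ$, and the out-degree is at most the kissing number of $\mathbb{R}^d$. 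General position is needed for the strict inequality.

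The gap is in the path-length step, which is the real content of the lemma, and your sketch does not produce the claimed bound. If each step removed a constant fraction of the volume of $B(q,r_i)$, then $r_{i+1}\le(1-c)^{1/d}r_i$, and you would get $O(\log n)$ steps with no $n^{1/d}$ factor at all. Conversely, a radial decrement of order $r_i n^{-1/d}$ removes only a fraction $1-(1-n^{-1/d})^d\to 0$ of the volume, so the two formulations you mix are incompatible. Neither per-step guarantee is derived anyway. The MRNG only provides some out-neighbor strictly closer to $q$, with no lower bound on the decrease, and only when $q$ is a data point, which is the MSNET setting of \cite{fu2017fast}. The density count $nr^d$ says nothing about which neighbor the path actually takes. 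What the source adds is $\Delta r$, the minimum side-length gap over non-isosceles triangles. In general position this gives each monotone step a decrease of at least $\Delta r$ in $\delta(\cdot,q)$. Accordingly, the source's bound $O(n^{1/d}\log(n^{1/d})/\Delta r)$ keeps the $1/\Delta r$ factor even under uniform sampling. Since $\Delta r$ is a minimum over $\Theta(n^3)$ triangles, it tends to $0$ as $n$ grows for random points, so uniformity cannot absorb it. Your claim that uniformly distributed points give $O(n^{1/d}\log n)$ outright is therefore not what the source establishes; the stated form needs the explicit extra assumption $\Delta r=\Theta(1)$, which is exactly how the paper uses it. Drop the heuristic, cite the MSNET path-length theorem with its $1/\Delta r$ factor, state that assumption, and then multiply by your degree bound.
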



However, constructing an exact \yin{MRNG} for a dataset of size $n$ has a time complexity of $O(n^3)$~\cite{fu2017fast}, which is prohibitively expensive. Consequently, practical methods (e.g., HNSW and NSG) construct graph indexes by approximating the \yin{MRNG}. Typically, they first retrieve hundreds of candidate neighbors (e.g., via approximate nearest neighbor search) and then apply an MRNG-style ordered pruning rule to this candidate set. A hyperparameter $m$ limits the maximum out-degree of each node, ensuring sparsity by retaining only the top-$m$ neighbors after pruning~\cite{wang2021comprehensive}.

\revRone{Despite the advantages of MRNG for ANN queries, it faces challenges in RFANN settings: although the MRNG is navigable before filtering, the routing path may rely on vertices that are removed during the RFANN query. For example, under the query range \([3,6]\), node \(2\) is filtered out, so a route from the entry node \(6\) to the in-range target node \(3\) may disappear even though both endpoints remain in the graph. Thus, the issue is that MRNG lacks monotonicity in the filtered graph. To address this, various RFANN methods have been proposed, as discussed in the Introduction.}

\revRone{
\noindent\textbf{Limitations of Existing RFANN Search Methods.} Existing RFANN methods can be broadly grouped into filter-placement strategies and range-aware hybrid indexes~\cite{wang2021milvus,engels2024approximate,liang2024unify,zuo2024serf,xu2024irangegraph}. The first group applies the range filter before, during, or after search~\cite{wang2021milvus,douze2024faiss,li2018design}. These strategies are easy to deploy, but none of them 
performs well across all selectivities: pre-filtering becomes expensive on large ranges, post-filtering wastes work on small ranges, and in-filtering may damage graph navigability. The second group incorporates attribute information more directly into the index by materializing range-aware ANN indexes such as 
SeRF~\cite{zuo2024serf} and iRangeGraph~\cite{xu2024irangegraph}. These methods improve specific workloads, but 
typically 
introduces index redundancy. 
This motivates a single graph model whose structural properties are preserved under arbitrary query ranges.
}

\section{Range-aware Relative Neighborhood Graph} \label{sec:rrng}
\revRone{The Range-aware Relative Neighborhood Graph (RRNG) is the ideal graph model that motivates RNSG. RRNG extends the formal \yin{MRNG} witness condition by requiring a pruning witness to be relevant in both vector space and attribute order.} \revRone{This change makes the model suitable for range filters: if both endpoints of an edge remain inside an interval, then every valid attribute-between witness for that edge also remains inside the same interval. We first define RRNG and then prove the two properties used later by RNSG: monotonic searchability and heredity under interval restriction.}

\begin{figure}[!t]
\vspace*{-3em}
  \centering
  \begin{subfigure}[t]{0.45\linewidth}
    \centering
    \includegraphics[page=1,width=\linewidth]{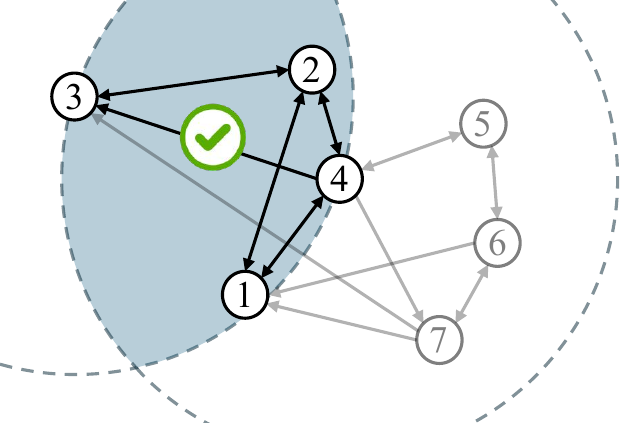}
    \caption{Edges retained by RRNG}
  \end{subfigure}
  \hfill
  \begin{subfigure}[t]{0.45\linewidth}
    \centering
    \includegraphics[page=1,width=\linewidth]{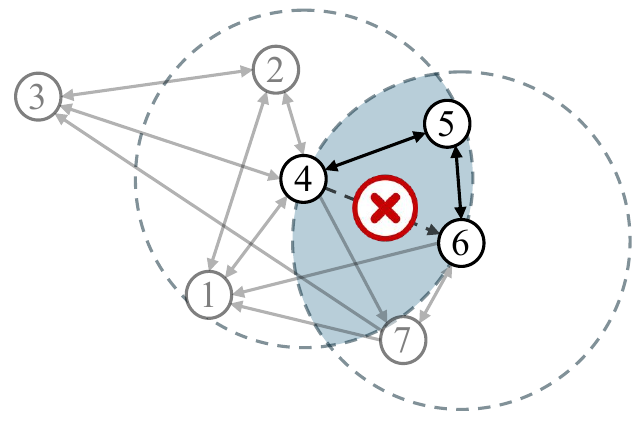}
    \caption{Edges pruned by RRNG}
  \end{subfigure}
  \vspace{-0.4em}
  \caption{Illustration of the RRNG pruning strategy, node labels denote numeric attribute values.}
  \label{fig:rrng-pruning-examples}
\end{figure}

\begin{definition}[RRNG]\label{def:rrng}
Given a dataset \(D\) in a metric space with distance function \(\delta\), for a source node \(x\) and a candidate \(y\), define the range-aware lune
\[
\begin{aligned}
\operatorname{rlune}(x,y)=
\{z\!\in\!D\!\setminus\!\{x,y\}\mid\;&
\delta(x,z)<\delta(x,y),\\
&\delta(y,z)<\delta(x,y),\\
&z.a\in(\min\{x.a,y.a\},\max\{x.a,y.a\})\}.
\end{aligned}
\]
The RRNG is a directed graph \(G=(V,E)\) whose outgoing edges satisfy the recursive condition
\[
x\to y\in E
\Longleftrightarrow
\operatorname{rlune}(x,y)=\emptyset
\ \text{or}\
\forall z\in\operatorname{rlune}(x,y),\ x\to z\notin E .
\]
\end{definition}

\revRone{As in \yin{MRNG}, this definition is recursive}. For a fixed source node \(x\), candidates on each side of \(x.a\) can be processed in nondecreasing attribute gap \(|y.a-x.a|\), with ties broken deterministically. If \(z\) is a valid range-aware witness for \(x\to y\), then \(z.a\) lies between \(x.a\) and \(y.a\), so \(z\) is processed before \(y\) in the same-side scan. Therefore the recursive definition is realized by retaining \(x\to y\) exactly when no previously retained outgoing neighbor of \(x\) lies in \(\operatorname{rlune}(x,y)\). The graph may contain both directions for a pair of nodes, but \revRone{the two directed edges are selected by the pruning rule of their respective source nodes.} Figures use simple edge drawings when the direction is not material; \revRone{search paths and pruning statements below use the directed outgoing edges.}


\begin{example} \label{example:rrng}
    Fig.~\ref{fig:rrng-pruning-examples} illustrates the RRNG structure, with node numbers indicating numeric attributes. In subfigure (a), the candidate edge is $(3,4)$. Although node $2$ lies in the geometric witness region, it does not satisfy the attribute-between condition because $2 \notin (3,4)$, so RRNG retains edge $(3,4)$. In subfigure (b), the candidate edge is $(4,6)$, and node $5$ satisfies both the geometric witness condition and the attribute-between condition, so RRNG prunes edge $(4,6)$.   
\end{example}

\noindent\textbf{Differences between \yin{MRNG} and RRNG.} Although Example~\ref{example:rrng} 
\yin{illustrates} the RRNG pruning rule, it would be misleading to view RRNG as simply an \yin{MRNG} with fewer pruning opportunities. RRNG changes which vertices are allowed to serve as witnesses, so the retained edge set can change in both directions. As shown in Fig.~\ref{fig:rng_to_rrng}, node labels again denote attribute values. The upper sketches visualize the generic replacement pattern, while the lower pair gives a concrete four-node example. In the \yin{MRNG} in subfigure (a), edge $(1,4)$ is preserved as a direct monotonic connection. In the corresponding RRNG in subfigure (b), edge $(1,4)$ is pruned because node $2$ lies in the attribute interval $(1,4)$ and witnesses triangle $(1,2,4)$. At the same time, RRNG retains edges such as $(1,2)$ and $(2,4)$, so neither graph's edge set contains the other's.

In summary, \yin{MRNG} establishes edges based solely on \textit{spatial} proximity, whereas RRNG incorporates both \textit{spatial} and \textit{attribute} proximity into the witness rule. This design principle leads to distinct graph topologies: direct edges connecting distant attributes (e.g., $(1,4)$ in Fig.~\ref{fig:rtrr_rng}) in \yin{MRNG} are often replaced in RRNG by paths traversing attribute-between nodes (e.g., $(1,2)$ and $(2,4)$ in Fig.~\ref{fig:rtrr_rrng}). This replacement makes the path more robust to interval filtering.

\begin{figure}[!t]
\vspace*{-3em}
\begin{center}
\subcaptionbox{\yin{MRNG}\label{fig:rtrr_rng}}{
\includegraphics[width=0.45\columnwidth]{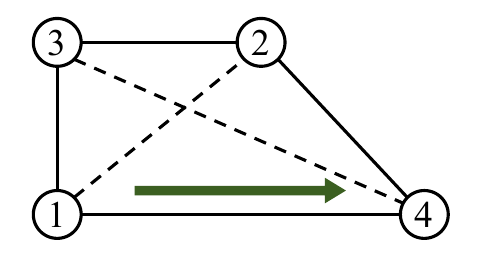}
}
\subcaptionbox{RRNG\label{fig:rtrr_rrng}}{
\includegraphics[width=0.45\columnwidth]{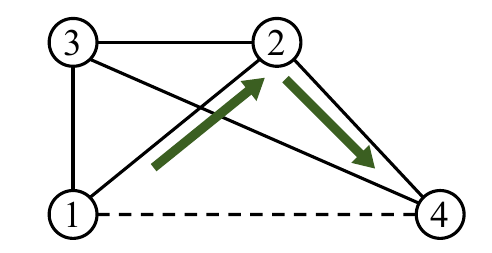}
}
\vspace{-0.4em}
\caption{Differences between \yin{MRNG} and RRNG, node labels denote numeric attribute values. The RRNG tends to replace edges (e.g., the edge $(1,4)$) with large attribute differences with those having smaller attribute differences (e.g., the edges $(1,2)$ and $(2,4)$). In the \yin{MRNG}, $1\rightarrow 4$ is a monotonic path, while in the RRNG $1\rightarrow 2 \rightarrow 4$ is a monotonic path. }
\vspace{-0.4em}
\label{fig:rng_to_rrng}
\end{center}
\end{figure}

\subsection{Theoretical Properties of RRNG} \label{subsec:rrng-theory}
\revRone{This subsection establishes the two RRNG properties that address the main failure mode of applying ordinary ANN graphs to RFANN. Monotonic searchability gives the same kind of target-reaching guarantee used by \yin{MRNG}-based ANN search, while heredity ensures that this property is not destroyed by taking an interval-induced subgraph. We begin by defining monotonic paths.}
\begin{definition}[Monotonic Path]
    \label{def:mono-path}
    Given a graph $G(V, E)$ and a distance function \( \delta \),  a directed path from node $x\in V$ to $y \in V$, denoted by $ x = v_0 \to v_1 \to \cdots \to v_m = y$, is referred to as a monotonic path if and only if
$\delta(v_{i+1}, y) < \delta(v_i, y), \quad \forall i \in \{0, 1, \dots, m-1\}.$
\end{definition}

\revRone{The following theorem shows that for any two nodes $x$ and $y$ in RRNG, there exists a path from $x$ to $y$ whose distance to the target $y$ strictly decreases at every step.}
\begin{theorem}[Monotonic Searchability Property]\label{thm:rrng-monotone}
Given a dataset \( D \) in a metric space with distance function \( \delta \), let \( G = (V, E) \) be an RRNG constructed over \( D \) as defined in Definition~\ref{def:rrng}. Then, for any two distinct nodes \( x, y \in V \), there exists a monotonic path 
 from $x$ to $y$ in $G$, along which the distance to $y$ strictly decreases at every step.
\end{theorem}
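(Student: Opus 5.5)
The proof goes by strong induction on the distance to the target: whenever the current node differs from $y$, we show it has an out-neighbor strictly closer to $y$. Fix the target $y$. For any $x\neq y$ we aim to show there is an edge $x\to w\in E$ with $\delta(w,y)<\delta(x,y)$. Since $D$ is finite, repeatedly taking such a step gives a sequence of nodes whose distances to $y$ strictly decrease. Hence no node repeats, the sequence must stop, and the only place it can stop is $y$. The resulting path is monotonic in the sense of Definition~\ref{def:mono-path}. Equivalently, we induct over the nodes of $D$ sorted by $\delta(\cdot,y)$.

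\textbf{Existence of a step.} Fix $x\neq y$. If $x\to y\in E$ we take $w=y$. Otherwise, by Definition~\ref{def:rrng}, $\operatorname{rlune}(x,y)\neq\emptyset$ and some $z\in\operatorname{rlune}(x,y)$ has $x\to z\in E$. (The negation of the right-hand side reads: the rlune is nonempty and not every element of it lacks an edge from $x$.) Membership in $\operatorname{rlune}(x,y)$ gives $\delta(y,z)<\delta(x,y)$, so $w=z$ is a valid step. It also gives $\delta(x,z)<\delta(x,y)$ and $z.a$ strictly between $x.a$ and $y.a$; these are not needed for monotonicity itself.

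\textbf{Main obstacle: the recursive definition.} The argument relies on the recursive edge condition being well defined, so that the edge set $E$ actually exists and satisfies the stated equivalence. I would invoke the processing-order remark following Definition~\ref{def:rrng}. For a fixed source $x$, every witness $z$ for $x\to y$ has $z.a$ strictly between $x.a$ and $y.a$, so $z$ lies on the same side of $x.a$ as $y$ and has a strictly smaller attribute gap. Since attribute values are assumed distinct, this order is strict and well founded, and the membership of $x\to y$ is therefore determined inductively from edges to candidates processed earlier. With $E$ well defined, the biconditional holds for every pair, and the step above is justified. No appeal to Lemma~\ref{lemma:rng-beam} is needed: the argument is self-contained and parallels the MRNG proof.
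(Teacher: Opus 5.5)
Your proof is correct and follows essentially the same route as the paper: fix $y$, show that every $x\neq y$ either has the edge $x\to y$ or has an out-neighbor $z\in\operatorname{rlune}(x,y)$ with $\delta(z,y)<\delta(x,y)$, and conclude by induction on (equivalently, strict descent in) the distance to $y$. Your explicit well-foundedness argument for the recursive definition is a useful addition; the paper makes this point in its remark after Definition~\ref{def:rrng} rather than inside the proof. One small correction: the gap order is strict because the rlune uses an open attribute interval, not because attribute values are distinct.
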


\begin{proof}
\revRoneStart
Fix the target node \(y \in V\). Let \(u_1,u_2,\dots,u_t,\dots\) denote all nodes \(u\neq y\), sorted by increasing distance to \(y\):

\[
  \delta(u_1,y) < \delta(u_2,y) < \cdots.
\]
We prove by induction on this order that every \(u_t\) has a directed monotonic path to \(y\).

\noindent\textbf{Base step:}
Consider \(u_1\), the node closest to \(y\). The directed edge \(u_1\to y\) cannot be pruned in RRNG, because any valid outgoing witness \(z\) for pruning it would have to satisfy \(\delta(z,y) < \delta(u_1,y)\), contradicting the choice of \(u_1\). Hence \(u_1 \to y\) is a monotonic path.

\noindent\textbf{Inductive step:}
Assume that every node \(u_j\) with \(\delta(u_j,y) < \delta(u_t,y)\) already has a monotonic path to \(y\). We now consider \(u_t\).

\textbf{Case 1:} If \(u_t\to y\in E\), then the single edge \(u_t \to y\) is already a monotonic path.

\textbf{Case 2:} If \(u_t\to y\notin E\), then by Definition~\ref{def:rrng} there exists an already retained outgoing witness \(z\) of \(u_t\) such that
\[
\delta(u_t,z) < \delta(u_t,y),\qquad
\delta(z,y) < \delta(u_t,y),
\]
and \(u_t\to z\in E\).
The inequality \(\delta(z,y) < \delta(u_t,y)\) means that \(z\) appears earlier than \(u_t\) in the above order. By the inductive hypothesis, \(z\) has a monotonic path
\[
z = v_0 \to v_1 \to \cdots \to v_m = y
\]
such that \(\delta(v_{i+1},y) < \delta(v_i,y)\) for every \(i\). Since \(\delta(z,y) < \delta(u_t,y)\), prepending the directed edge \(u_t\to z\) gives
\[
u_t \to z \to v_1 \to \cdots \to y,
\]
whose distance to \(y\) also strictly decreases at the first step. Therefore this concatenated path is monotonic.

By induction, every node has a directed monotonic path to \(y\). Since \(y\) was arbitrary, the theorem holds for every ordered pair of distinct nodes \(x,y\in V\).
\reviseend
\end{proof}

\revRone{The induction proof above can be read directly from Fig.~\ref{fig:rtrr_rrng}. In the lower RRNG example, the direct outgoing edge \(1\to4\) is absent because node \(2\) witnesses triangle \((1,2,4)\). However, node \(2\) is strictly closer to \(4\) than node \(1\) is, and edge \(1\to2\) exists. The proof therefore recurses on the shorter pair \((2,4)\), yielding the monotonic path \(1 \rightarrow 2 \rightarrow 4\). Note that the \yin{MRNG} also satisfies a monotonic-search property~\cite{fu2017fast}; Fig.~\ref{fig:rtrr_rng} illustrates the direct monotonic path \(1\rightarrow4\) in that case.}

\begin{corollary}\label{lem:efficient_search}
Let \( G = (V, E) \) be an RRNG constructed from a dataset \( D \) in a metric space with distance function \( \delta \). For any target object \(y\in V\), best-first graph search on \(G\), using \(\delta(\cdot,y.v)\) as the distance key and a search list large enough not to discard the current closest candidate, reaches \(y\) regardless of the entry node \(x\).
\end{corollary}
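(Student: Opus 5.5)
We derive the corollary directly from Theorem~\ref{thm:rrng-monotone}, using the same argument that shows the MRNG supports greedy search (Lemma~\ref{lemma:rng-beam}). The plan is to prove an invariant of best-first search: at every point before \(y\) is popped, the candidate list contains a node with an outgoing edge to a node strictly closer to \(y\). Finiteness then forces termination at \(y\).

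\textbf{Step 1.} Let \(u\neq y\) be any node that the search expands. Theorem~\ref{thm:rrng-monotone} gives a monotonic path \(u\to v_1\to\cdots\to y\). In particular, \(u\) has an out-neighbour \(v_1\) with \(\delta(v_1,y)<\delta(u,y)\). When \(u\) is expanded, \(v_1\) is either inserted into the candidate set \(\mathcal{S}\) or has already been visited.

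\textbf{Step 2.} Let \(c_t\) be the node of minimum key \(\delta(\cdot,y)\) among all nodes visited after \(t\) expansions. Suppose \(c_t\neq y\).
\begin{itemize}
\item If \(c_t\) has not yet been expanded, it is still in \(\mathcal{S}\), since the list size assumption says the current closest candidate is never discarded. Best-first order then expands it before anything farther away.
\item Once \(c_t\) is expanded, Step 1 inserts a visited node strictly closer to \(y\).
\end{itemize}
Hence \(\delta(c_t,y)\) strictly decreases each time the current closest node is expanded.

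\textbf{Step 3.} \(V\) is finite and the keys \(\delta(u,y)\) take finitely many values, so this strictly decreasing sequence must stop. It can only stop at \(y\), because every other node admits an improvement. Since \(\delta(y,y)=0\) is the minimum key, \(y\) is then popped and placed in \(\mathcal{R}\). The entry node \(x\) played no role in the argument.

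\textbf{Main obstacle.} The delicate point is the termination rule of beam search: the search stops once the minimum key in \(\mathcal{S}\) exceeds the worst key in \(\mathcal{R}\). We must rule out stopping before the closest visited node has been expanded. Before \(y\) is found, \(c_t\) is by definition the closest visited node, so its key is at most the key of every element of \(\mathcal{R}\). If \(c_t\) is unexpanded, it lies in \(\mathcal{S}\), so the stopping condition fails. If \(c_t\) is expanded, a strictly closer node has been visited, contradicting the minimality of \(c_t\). I would make this explicit, with ties handled by the strict inequalities of monotonicity, so that the argument does not depend on the beam-width details beyond the stated assumption.
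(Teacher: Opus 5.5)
Your proposal is correct and follows essentially the same route as the paper's proof: take the first edge of the monotonic path from Theorem~\ref{thm:rrng-monotone} at each expanded non-target node, so the closest candidate's key strictly decreases, and finiteness forces the search to reach \(y\). Your explicit handling of the already-visited neighbour case and of the stopping rule (minimum key in \(\mathcal{S}\) exceeding maximum key in \(\mathcal{R}\)) spells out details that the paper's brief argument leaves implicit.
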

This follows directly from Theorem~\ref{thm:rrng-monotone}: best-first expansion can repeatedly advance along a distance-decreasing outgoing neighbor until it reaches the target. The full argument is given in \rnsgappendixref{app:proof-efficient-search}.

RRNG also satisfies a crucial \textit{hereditary} property that enables efficient RFANN processing.

\begin{theorem}[Hereditary Property]\label{thm:rrng-closure}
Let \( G = (V, E) \) be an RRNG constructed from a dataset \( D \) in a metric space with distance function \( \delta \). For any RFANN query interval \(I=[a_l,a_r]\), let \(V_I=\{x\in V\mid x.a\in I\}\) and let \(G[I]=(V_I,E')\) be the directed subgraph induced by \(V_I\), where \(E'=\{(u,v)\in E\mid u,v\in V_I\}\). Then \(G[I]\) is exactly the RRNG constructed on \(V_I\).
\end{theorem}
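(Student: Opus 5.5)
The plan is to show that the decision whether to keep \(x\to y\), for \(x,y\in V_I\), is the same in \(G\) and in the RRNG built on \(V_I\), call it \(G_I\). The starting observation is an interval-closure property of the range-aware lune. If \(x.a,y.a\in I=[a_l,a_r]\) and \(z\in\operatorname{rlune}(x,y)\) computed in \(D\), then
\[
z.a\in(\min\{x.a,y.a\},\max\{x.a,y.a\})\subseteq I,
\]
so \(z\in V_I\). Since the distance conditions do not depend on the ambient set, \(\operatorname{rlune}_D(x,y)=\operatorname{rlune}_{V_I}(x,y)\) for every pair \(x,y\in V_I\). This is the key lemma. It is exactly where the attribute-between requirement is used, and it is what fails for MRNG.

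Next I will make the recursive definition well-founded for a fixed source \(x\), as explained after Definition~\ref{def:rrng}. Candidates on each side of \(x.a\) are processed in nondecreasing attribute gap \(|y.a-x.a|\). Every witness \(z\in\operatorname{rlune}(x,y)\) lies strictly between \(x.a\) and \(y.a\), and hence precedes \(y\) in that same-side order. The edge set out of \(x\) is therefore uniquely determined by induction along this order, and this holds in both \(D\) and \(V_I\). Restricting the order to \(V_I\) keeps the relative order of the nodes of \(V_I\), and by the lemma all relevant witnesses are already in \(V_I\).

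Then I fix \(x\in V_I\) and prove by strong induction on the same-side order of \(y\in V_I\) that \(x\to y\in E\) if and only if \(x\to y\in E(G_I)\). By Definition~\ref{def:rrng}, \(x\to y\in E\) holds exactly when no \(z\in\operatorname{rlune}_D(x,y)\) has \(x\to z\in E\). By the lemma these \(z\) are precisely the elements of \(\operatorname{rlune}_{V_I}(x,y)\). Each such \(z\) is in \(V_I\) and precedes \(y\), so the induction hypothesis gives \(x\to z\in E\) if and only if \(x\to z\in E(G_I)\). The two retention conditions therefore coincide, and the base case (empty lune) is immediate. Since \(E'\) keeps exactly the edges of \(E\) with both endpoints in \(V_I\), this yields \(E'=E(G_I)\).

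The main obstacle is rigor around well-definedness rather than any hard estimate. The recursive definition must determine a unique graph, and the ordering argument has to handle ties: distinct attributes are assumed, with ID tie-breaking otherwise, and nodes on opposite sides of \(x.a\) never witness each other. Once the witness-precedes-target fact is stated cleanly, the induction goes through directly.
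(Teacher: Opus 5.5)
Your proposal is correct and follows essentially the same argument as the paper. Both fix a source \(x\in V_I\), induct along the same-side order of nondecreasing attribute gap, and use the fact that every attribute-between witness for \(x\to y\) with \(x,y\in V_I\) already lies in \(V_I\). The only difference is phrasing. You state the key step as the lune identity \(\operatorname{rlune}_D(x,y)=\operatorname{rlune}_{V_I}(x,y)\) and apply it to the recursive definition, while the paper states it as the candidate prefix processed before \(y\) being identical in the full and restricted pruning scans.
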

\begin{proof}
Let \(H[I]=(V_I,E_I)\) be the RRNG constructed directly on \(V_I\). Fix a source node \(x\in V_I\). We show that the outgoing neighbors of \(x\) in \(G\), restricted to \(V_I\), are exactly the outgoing neighbors selected for \(x\) in \(H[I]\). Consider candidates on the right side of \(x.a\); the left side is symmetric. If \(y\in V_I\) and \(y.a>x.a\), then every right-side candidate processed before \(y\) has attribute in \((x.a,y.a)\). Since \(x\) and \(y\) both lie in the interval \(I\), every such earlier candidate also lies in \(V_I\).

Therefore the prefix of candidates that can affect the prune-or-keep decision for \(y\) is identical in the full RRNG construction and in the construction restricted to \(V_I\). By induction over this candidate order, the retained outgoing-witness set before processing \(y\) is also identical in the two constructions. The geometric witness test is unchanged, and any valid attribute-between witness for \(x\to y\) lies inside \(I\). Hence \(x\to y\) receives the same decision in \(G\) and in \(H[I]\). Applying the same argument to the left side and then to every \(x\in V_I\) proves \(E'=E_I\).
\end{proof}

\revRone{By Theorem~\ref{thm:rrng-closure}, the subgraph induced from the RRNG by any query range remains a \textit{correct} RRNG, preserving the theoretical searchability properties of the ideal graph. Consequently, the RRNG model does not require a separate graph for every query interval. This is the structural property that motivates a single-index design for RFANN, in contrast to methods that materialize multiple range-specific graph structures~\cite{engels2024approximate,zuo2024serf,xu2024irangegraph}.}

\revRone{This theorem also makes the theoretical advantage of RRNG over MRNG under range filtering explicit. In MRNG-style graphs, a path can rely on an intermediate node that is later removed by the query range, so the filtered subgraph may lose navigability. In RRNG, by contrast, pruning witnesses must themselves lie between the two endpoints in attribute order. Therefore the same attribute interval that keeps the endpoints also keeps every valid witness, which is exactly why the RRNG construction commutes with interval restriction.}

\begin{example}
    \revRone{Fig.~\ref{fig:rrng-induce-demo} illustrates the hereditary property described in Theorem~\ref{thm:rrng-closure}. Given an RFANN query with range [3,6], nodes 1 and 2 are filtered out, but the induced subgraph still preserves an in-range search route 6-5-4-3. The key reason is the attribute-between witness rule: if an edge's two endpoints are kept by an interval, then any valid witness that could justify pruning that edge is also inside the same interval. Thus the induced subgraph is a valid RRNG on the in-range objects.}
\end{example}

\begin{figure}[!t]
\vspace*{-3em}
  \centering
  \includegraphics[width=0.9\linewidth]{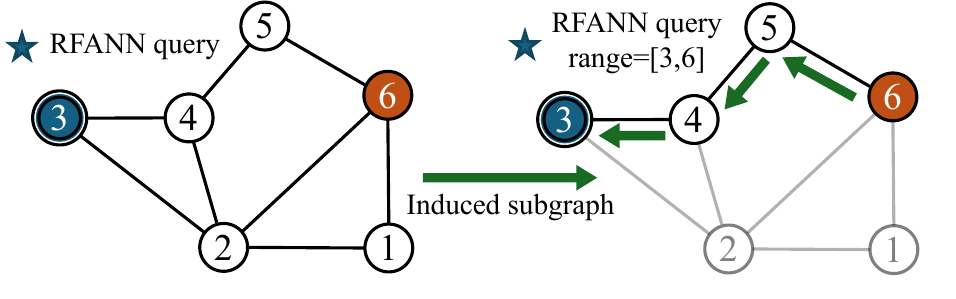}
    \vspace{-0.3em}
    \caption{Illustration of the hereditary property of RRNG, \revRone{node labels denote numeric attribute values}. The left subfigure shows the complete RRNG graph, while the right subfigure depicts the subgraph induced by a query range $[3, 6]$. }
    \vspace{-0.4em}
  \label{fig:rrng-induce-demo}
\end{figure}

\subsection{Complexity Analysis} \label{subsec:rrng-complexity}
This subsection presents complexity analyses of the RRNG. We begin by examining the expected cost of node-to-node monotone search on the RRNG, then proceed to analyze the RRNG size, and finally establish the time complexity of its construction. Due to space limitations, all missing proofs are provided in the full version~\cite{fullversion}.

\begin{theorem}\label{thm:search-complex}
The expected cost of node-to-node monotone search on an RRNG is $O(n^{1/d} \log^2 n) \approx O(\log^2 n)$.
\end{theorem}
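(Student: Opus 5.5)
The plan is to follow the probabilistic framework behind Lemma~\ref{lem:MRNG-complex} and split the expected cost of a node-to-node monotone search into two factors: the expected \emph{length} of the monotonic path guaranteed by Theorem~\ref{thm:rrng-monotone}, and the expected \emph{out-degree} of each node expanded along it. The cost is at most their product. I would work under the same assumptions used for MRNG in~\cite{fu2017fast}: points drawn i.i.d.\ from a distribution with bounded density in $\mathbb{R}^d$, $n$ large, and in addition attributes independent of the vectors, so that after sorting the attribute order is a uniformly random permutation relative to geometry.

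\emph{Path length.} For the expected length of the monotonic path I would reuse the MRNG argument. By Theorem~\ref{thm:rrng-monotone}, each step from $u_t$ moves either directly to $y$ or to a retained witness $z$ with $\delta(u_t,z)<\delta(u_t,y)$ and $\delta(z,y)<\delta(u_t,y)$, so $z$ lies in the lune of $(u_t,y)$. The key point to check is the size of the decrease in $\delta(\cdot,y)$ at each step. The MRNG analysis uses the nearest point in the lune, which yields a constant-factor shrink in expectation and hence $O(n^{1/d}\log n)$ steps. In the RRNG the witness is the first retained neighbor in attribute-gap order rather than the geometrically closest one, so this does not carry over verbatim. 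I would handle it by conditioning: among lune points that are also attribute-between, the retained witness is still a lune point. I would then argue, by the same volume/ratio argument as in~\cite{fu2017fast}, that the expected ratio $\delta(z,y)/\delta(u_t,y)$ stays bounded away from $1$ by a constant depending only on $d$. This gives path length $O(n^{1/d}\log n)$.

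\emph{Out-degree, the main obstacle.} Here the extra $\log n$ must appear. I would fix $x$ and scan candidates on the right side in increasing attribute gap, the left side being symmetric. A candidate $y$ is retained only if no previously retained neighbor lies in $\operatorname{rlune}(x,y)$.
\begin{itemize}
\item The retained neighbors split into angular cones around $x$. Within a fixed cone of small aperture, a retained neighbor $z$ closer to $x$ than $y$ and in the same cone lies in $\operatorname{lune}(x,y)$.
\item Consequently, within a cone, the retained sequence in attribute order must have strictly decreasing distance to $x$. It is therefore a sequence of successive record minima of $\delta(x,\cdot)$ over a random order.
\item Since attributes are a random permutation independent of geometry, the expected number of such records among $n$ candidates is the harmonic number $H_n=O(\log n)$.
\item With $c_d$ cones, a constant depending on $d$ as in the MRNG degree bound, the expected out-degree is $O(c_d\log n)=O(\log n)$ compared to MRNG's constant.
\end{itemize}

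Multiplying the path length by the per-node degree gives $O(n^{1/d}\log n)\cdot O(\log n)=O(n^{1/d}\log^2 n)$, which is $\approx O(\log^2 n)$ because $n^{1/d}$ is essentially constant for large $d$. The step I expect to be hardest is the path-length claim, since the witness is selected by attribute order rather than proximity. If the constant-factor shrink fails, the fallback is to bound steps by the number of distinct distance "shells" around $y$ that are crossed. Either way, the degree bound via record statistics is the source of the extra $\log n$, and I am fairly confident in that step.
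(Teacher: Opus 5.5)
Your decomposition matches the paper's. Lemma~\ref{thm:irng-exp-length} bounds the expected monotone-walk length, Lemma~\ref{lem:rrng-exp-deg} bounds the expected out-degree by $O(D_r\log n)$ using $30^\circ$ cones, the left/right split, and record minima of a random attribute order, and the theorem multiplies the two.

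On path length you are working harder than the paper, in a direction that does not pay off. The paper imports the MSNET bound $O(n^{1/d}\log n/\Delta r)$. It observes that this bound uses only that every step strictly decreases $\delta(\cdot,y)$, with the separation constant $\Delta r$ controlling the decrement. Any monotone path satisfies this, so which retained witness is followed is irrelevant, and $\Delta r$ is simply carried as a model constant. Your shell-counting fallback is exactly this argument. Your primary route is not established. The followed witness is not a uniform lune point, and $\delta(z,y)/\delta(u_t,y)$ can be arbitrarily close to $1$. Moreover, a genuine constant-factor shrink would give a logarithmic number of steps, so it is not the mechanism behind the $n^{1/d}$ factor you are trying to reproduce.

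The genuine gap is in the degree step, the one you are most confident in. That the retained points of one cone-side are strictly decreasing in $\delta(x,\cdot)$ along the scan is correct. It does not make them the record minima of all candidates in that cone-side. Under the recursive rule only \emph{retained} points act as witnesses, so an earlier, closer candidate $w$ that was itself pruned by a retained neighbor in a different cone blocks nothing.

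Concretely, put $x$ at the origin of the plane and scan right-side candidates $z,w,y_1,y_2$ with polar coordinates $(0.9,55^\circ)$, $(1,0^\circ)$, $(3,-30^\circ)$, $(2,-30^\circ)$. Then:
\begin{itemize}
\item $z$ is kept (it is first on its side) and prunes $w$;
\item $\delta(z,y_1)\approx 3.06>3$ and $\delta(z,y_2)\approx 2.12>2$;
\item $y_1$ is farther from $x$ than $y_2$.
\end{itemize}
So both $y_1$ and $y_2$ are kept. Yet in the $30^\circ$-half-angle cone with axis $-15^\circ$ the only record is $w$.

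Monotonicity alone bounds a cone-side only by the longest decreasing subsequence of a random order, which has length of order $\sqrt m$. Obtaining $H_m$ therefore needs an additional argument that controls this cross-cone blocking. The nesting $\operatorname{lune}(x,w)\subseteq\operatorname{lune}(x,y)$ that would make the record argument work holds only for $w$ on the segment $xy$, and shrinking the cones does not restore it. The paper's Lemma~\ref{lem:rrng-exp-deg} asserts the same ``kept iff record minimum'' equivalence, so this is a gap you share with the paper rather than a deviation from it.
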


Theorem~\ref{thm:search-complex} shows that the ideal RRNG retains the logarithmic-style monotone-search behavior of \yin{MRNG}-based ANN graphs, with an additional logarithmic factor from preserving attribute-order witnesses. Together with Theorem~\ref{thm:rrng-closure}, this property is preserved inside arbitrary query intervals.

\begin{theorem}\label{thm:space-complex}
The size of an RRNG is bounded by $O(S_r\log n)$, where $S_r$ is the expected size of \yin{MRNG}.
\end{theorem}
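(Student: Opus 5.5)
We plan to bound the out-degree of a single source node $x$ in the RRNG and then sum over all nodes. The definition already splits the candidates of $x$ into two independent scans, one to the right of $x.a$ and one to the left, each taken in nondecreasing attribute gap. We treat the right side; the left side is symmetric and contributes the same bound. Order the right-side candidates by attribute as $y_1,y_2,\dots$ with $y_1.a<y_2.a<\cdots$. Then $y_j$ is kept exactly when no previously kept neighbor $z$ (necessarily some $y_i$ with $i<j$) satisfies $\delta(x,z)<\delta(x,y_j)$ and $\delta(z,y_j)<\delta(x,y_j)$.

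The key structural step is a dyadic decomposition of the attribute order. Partition the right-side candidates into blocks $B_0,B_1,\dots,B_{\lceil\log_2 n\rceil}$, where $B_k$ contains the candidates of rank in $[2^k,2^{k+1})$, so there are $O(\log n)$ blocks. We claim the kept neighbors inside a single block $B_k$ behave like MRNG neighbors of $x$ among the points of $B_0\cup\dots\cup B_k$, or at least are dominated by them. Every point of an earlier block has attribute strictly between $x.a$ and any point of $B_k$, so it is a valid attribute-between witness for every candidate in $B_k$. Hence a kept $y\in B_k$ has an empty MRNG-style lune with respect to the kept neighbors among all lower-ranked candidates. Within one block, the kept neighbors therefore form a set of pairwise ``non-shadowing'' directions around $x$: for two kept $y,y'$ with $\delta(x,y')\ge\delta(x,y)$, we must have $\delta(y,y')\ge\delta(x,y')$, since otherwise $y$ would have pruned $y'$ if it came earlier in attribute order. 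When $y$ comes later in attribute order the roles are asymmetric, and this is the delicate point discussed below. We would then invoke the standard MRNG argument from \cite{fu2017fast} to conclude that the expected number of such kept neighbors in one block is $O(S_r/n)$. That argument uses either the angular bound (any two kept neighbors subtend an angle of at least $60^\circ$ at $x$) or the random-data expected-degree estimate.

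The main obstacle is exactly this asymmetry. The witness must precede the candidate in attribute order, so a closer point that comes later in attribute order cannot prune an earlier, farther point. The ``pairwise $60^\circ$'' property therefore holds only in one direction, and a block could contain a long chain of far-to-near points that all survive. Our plan is to handle this with the randomness assumption used in the MRNG complexity analysis, namely that attributes are independent of positions. Under that assumption the attribute order within a block is a uniformly random permutation. We would then count, for each angular cone around $x$ of a fixed aperture (a constant number of cones, depending on $d$, matching the per-node MRNG degree), the expected number of survivors. Within a cone, a later point survives only if it is closer to $x$ than all earlier points of the block in that cone. This is a record statistic of a random permutation, whose expected count is a harmonic number, $O(\log n)$.

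This suggests that the $\log n$ factor may come either from the blocks or from the records, but not from both. To avoid an $O(\log^2 n)$ bound, we would drop the dyadic blocks and argue directly with records over the whole right-side sequence. The expected number of nodes kept per cone is then $O(H_n)=O(\log n)$, and summing over the cones and over the two sides gives an expected out-degree of $O(\deg_{\mathrm{MRNG}}\cdot\log n)$. Summing over all $n$ sources gives $O(S_r\log n)$. The step we would scrutinize most carefully is the claim that a record point in a cone prunes every later non-record point in the same cone. This requires the cone aperture to be small enough (below $60^\circ$) that the nearer earlier point lies in the lune of the later, farther point.
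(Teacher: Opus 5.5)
Once you drop the dyadic blocks, your argument is the paper's own (proof of Lemma~\ref{lem:rrng-exp-deg}). Both use cones around $x$, the two attribute sides, the label-order scan, record minima of a random permutation, and a harmonic sum over cones, sides and sources. The step you single out, however, fails for a reason unrelated to the aperture.

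\textbf{The gap.} In an RRNG only a \emph{retained} out-neighbor of $x$ can act as a witness. The record point of a cone can itself be pruned by a retained point lying in a \emph{different} cone. After that it prunes nothing, so later, farther points of its cone may survive.

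\textbf{A concrete counterexample.} In the plane take $x=(0,0)$, $w=(0,1)$, $z=(L,1)$ with $L\ge 1$, and $y_k=(L+k,0)$ for $k=1,\dots,m$. Let attributes increase in the order $x,w,z,y_m,y_{m-1},\dots,y_1$.
\begin{itemize}
\item $w$ is retained.
\item $z$ is pruned by $w$, since $\delta(x,w)=1$ and $\delta(z,w)=L$ are both below $\delta(x,z)=\sqrt{L^2+1}$.
\item Every $y_k$ is retained, because $\operatorname{rlune}(x,y_k)=\{z\}$ and $z$ is not retained. Here the earlier $y_i$ are farther from $x$, and $\delta(y_k,w)^2=(L+k)^2+1>\delta(x,y_k)^2$ keeps $w$ out of the lune.
\end{itemize}
All of $z,y_1,\dots,y_m$ lie within $\arctan(1/L)$ of the direction $(1,0)$. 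So for any cone size (taking $L$ large), after a rotation they share one cone of your cover. In that cone $z$ is the only record, yet $m$ points are retained. All inequalities are strict, so the configuration survives perturbation and has positive probability under the random model. Thus even the counting inequality ``retained in a cone $\le$ records in that cone'' fails, not just the set inclusion, and shrinking the cones does not help.

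\textbf{What does hold, and what is missing.} Pathwise, within one cone (pairwise angles below $60^\circ$) and one side, retained points have strictly decreasing distance to $x$ in scan order. An earlier, closer retained point would lie in the later one's $\operatorname{rlune}$. This bounds the count only by a decreasing subsequence of a random permutation. Its longest such subsequence has expected length about $2\sqrt{m}$, so the $O(\log n)$ per-cone bound does not follow. To finish, you need an additional estimate showing that the following event is rare in expectation: a cone's record is pruned by an outside witness while farther points of the cone escape that witness's lune. The constraint $\delta(x,w)<2\,\delta(x,z)\cos\angle wxz$ for $w\in\operatorname{lune}(x,z)$ is a natural starting point, but this is a new step, not a tuning of the aperture. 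The paper's proof asserts exactly the same equivalence (``$x\to y$ is kept iff $y$ is a record minimum'') on the same grounds, so it does not supply this step either.
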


Theorem~\ref{thm:space-complex} shows the RRNG space overhead is $O(\log n)$ relative to the \yin{MRNG}. More importantly for RFANN indexing, the single-graph design avoids the multiplicative storage overhead incurred by methods that precompute or compose multiple graph structures across ranges \cite{zuo2024serf,xu2024irangegraph,engels2024approximate}.

\noindent\textbf{Basic Approach to Construct RRNG.} Definition~\ref{def:rrng} suggests a direct exact construction. For each source node \(x\), we split all other nodes into the two attribute sides of \(x.a\), process each side by nondecreasing attribute gap, and decide whether to keep \(x\to y\) by checking earlier same-side candidates \(z\) as possible outgoing witnesses. A literal implementation enumerates source-candidate-witness triples \((x,y,z)\), which gives the cubic bound below.

\begin{theorem} \label{thm:rrng-built-time}
The time complexity of the basic RRNG construction algorithm is $O(n^3)$.
\end{theorem}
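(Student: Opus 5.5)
We bound the cost of the basic construction described just before Theorem~\ref{thm:rrng-built-time} by counting the source--candidate--witness triples it examines. The algorithm is a triple loop over the source \(x\), the candidate \(y\) and the potential witness \(z\), so the proof counts the work in each loop and checks that every elementary operation costs \(O(1)\) (or \(O(d)\) if the dimension is treated as a variable).

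First, preprocessing. The dataset is already sorted by attribute, as assumed in Section~\ref{sec:preliminary}. So for a fixed source \(x=o_i\), the right-side candidates in nondecreasing attribute gap are simply \(o_{i+1},o_{i+2},\dots,o_n\). The left-side candidates are \(o_{i-1},\dots,o_1\). Producing this order therefore costs \(O(n)\) per source, or \(O(n\log n)\) overall if a sort is needed, and this is dominated by the cubic term. If desired, all pairwise distances \(\delta(\cdot,\cdot)\) can be precomputed in \(O(n^2)\) time, so that every later distance evaluation is a table lookup.

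Second, the main loops. For fixed \(x\) and each candidate \(y\) on one side, the algorithm decides whether to keep \(x\to y\). By the discussion after Definition~\ref{def:rrng}, this decision depends only on whether some already retained outgoing neighbor \(z\) of \(x\) lies in \(\operatorname{rlune}(x,y)\). Every such \(z\) is processed earlier on the same side, so the scan examines at most \(n-2\) nodes. Each membership test checks two distance inequalities and one attribute comparison, which is \(O(1)\) time. Hence one pair \((x,y)\) costs \(O(n)\), and one source costs \(O(n)\cdot O(n)=O(n^2)\). Summing over the \(n\) sources gives \(O(n^3)\).

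Third, correctness of this realization. By the ordering argument after Definition~\ref{def:rrng}, the sequential scan realizes exactly the recursive definition: every valid witness precedes \(y\) in the same-side order, so its keep-or-prune status is already fixed when \(y\) is examined. The output is therefore the RRNG, and the bound applies to constructing it.

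The main obstacle is not the counting. It is making sure the recursive definition introduces no extra cost, such as a fixed-point iteration; the attribute-ordered processing argument removes this concern. A minor point to state explicitly is the convention on distance evaluations: they cost \(O(d)\) each, giving \(O(dn^3)\), which is \(O(n^3)\) when \(d\) is treated as a constant or when the \(O(n^2 d)\) precomputation is used.
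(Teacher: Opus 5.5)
Your proposal is correct and follows the same route as the paper: both bound the basic construction by enumerating the \(O(n^3)\) source--candidate--witness triples \((x,y,z)\), where \(z\) ranges over earlier same-side candidates, and each triple costs a constant number of comparisons. Your treatment of the dimension factor is a slightly tidier justification than the paper's remark that the \(O(d)\) factor is dominated in the high-dimensional setting: precomputing all pairwise distances in \(O(n^2 d)\) time makes each test a lookup, for a total of \(O(n^3+n^2d)\).
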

The proof follows by enumerating, for every source-candidate pair, all earlier same-side third-node witnesses; details appear in \rnsgappendixref{app:proof-rrng-build-time}.

By Theorem~\ref{thm:rrng-built-time}, constructing an RRNG graph requires ${O}(n^3)$ time, the same as that of the \yin{MRNG} \cite{arya1993approximate}, which makes it impractical. The next section therefore constructs an approximate RRNG graph index, the range-aware neighborhood search graph (RNSG).


\section{The Proposed RNSG Approach} \label{sec:rnsg}
\revRone{Exact RRNG construction is useful as a theoretical target but is too expensive for large datasets. We therefore propose the range-aware neighborhood search graph (RNSG), a practical approximation of RRNG. RNSG does not enumerate all possible witnesses. Instead, for each node it forms a bounded candidate set from two sources: spatial neighbors from an approximate KNN graph and nearby objects in attribute order. It then applies the RRNG pruning rule locally to this candidate set. The resulting graph is a single index for all query intervals. The following subsections first describe the pruning rule, then define the bounded candidate source used by practical RNSG, state the interval-restriction property that remains valid under this fixed source, and finally give the query algorithm.}

\begin{algorithm}[t]
\small
\setcounter{AlgoLine}{0}
\LinesNumbered
\caption{Fast Range-Aware Pruning Strategy}
\label{alg:RAP}
\KwIn{A node $x$, a candidate set $C$ for $x$, maximum edges $m$}
\KwOut{Neighbor set $NS$ of $x$}

\SetKwFunction{RRNGPrune}{RRNGPrune}
\SetKwFunction{Prune}{Prune}
\SetKwProg{Proc}{Procedure}{}{}

\Proc{\RRNGPrune{$x, C, m$}}{
    $NS \leftarrow \varnothing$\;
    $C_l \gets \{\, v \in C \mid v.a < x.a \,\}$\;
    $C_r \gets \{\, v \in C \mid v.a > x.a \,\}$\;
    Sort $C_l$ and $C_r$ in ascending order of $|v.a - x.a|$\;
    $R_l \leftarrow$ \Prune($x$, $C_l$, $\frac{m}{2}$)\;
    $R_r \leftarrow$ \Prune($x$, $C_r$, $\frac{m}{2}$)\;
    $NS \leftarrow R_l \cup R_r$\;
    \Return $NS$ \;
}
\Proc{\Prune{$x, C, m$}}{
    $R \leftarrow \varnothing$\;
    \For{$v_i \in C$}{
        keep $\leftarrow \text{True}$\;
        \For{$v_j \in R$}{
        \If{$\delta(x,v_j) < \delta(x,v_i)$ \textbf{and}\\
        \hspace{2em}$\delta(v_i,v_j) < \delta(x,v_i)$}{
            keep $\leftarrow$ \text{False}\;
            \textbf{break}\;
            }            
        }
        \If{\textnormal{keep} \textbf{and} $|R|<m$}{
            $R \leftarrow R\cup v_i$\;
        }
    }
    \Return $R$\;
}
\end{algorithm}

\subsection{Fast Range-Aware Pruning Strategy} \label{subsec:fast-rrng-prune}
\revRone{Algorithm~\ref{alg:RAP} is the local pruning routine used by RNSG. Given a center node $x$ and its candidate set $C$, the algorithm first separates candidates on the left and right sides of $x$ in attribute order. This separation is not an implementation trick; it follows from the witness condition in Definition~\ref{def:rrng}. A candidate on one side of $x.a$ cannot be an attribute-between witness for a candidate on the other side, so the two sides can be pruned independently and then merged.}
\begin{lemma}
\label{lemma:prune-divide}
For a center node $x$ with candidate set $C$, partition $C$ into two sets $C_l$ and $C_r$ where $C_l=\{y\in C \mid y.a < x.a\}$ and $C_r=\{y\in C \mid y.a > x.a\}$. Then this partition ensures independence in RRNG pruning: edges from $x$ to any node in $C_l$ are unaffected by nodes in $C_r$, and vice versa.
\end{lemma}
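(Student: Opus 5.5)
The argument rests on the attribute-between clause in the definition of $\operatorname{rlune}$ (Definition~\ref{def:rrng}). I will show that a node on one side of $x.a$ can never be a witness for an edge to a node on the other side. First I make precise what ``unaffected'' means. In the recursive rule, and in Algorithm~\ref{alg:RAP}, the keep/prune decision for $x\to y$ depends only on which already-retained outgoing neighbors $z$ of $x$ lie in $\operatorname{rlune}(x,y)$ (restricted to the candidate set $C$ in the local version). So it suffices to prove two things. For every $y\in C_l$, one has $\operatorname{rlune}(x,y)\cap C_r=\emptyset$, and symmetrically for $y\in C_r$. And the order in which candidates are processed on one side does not depend on the other side.

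\textbf{Step 1 (witness confinement).} Take $y\in C_l$, so $y.a<x.a$. Any $z\in\operatorname{rlune}(x,y)$ satisfies $z.a\in(\min\{x.a,y.a\},\max\{x.a,y.a\})=(y.a,x.a)$, hence $z.a<x.a$. Every $w\in C_r$ has $w.a>x.a$, so $w\notin\operatorname{rlune}(x,y)$. The case $y\in C_r$ is symmetric: the lune is confined to $(x.a,y.a)$, which lies strictly above $x.a$. Under the paper's standing assumption of distinct attributes, with ties broken deterministically by ID, the node $x$ itself and any node with attribute equal to $x.a$ are excluded as well.

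\textbf{Step 2 (decision independence).} I will argue by induction along the nondecreasing-attribute-gap order on each side. Suppose the set of retained neighbors in $C_l$ is the same whether or not $C_r$ is present. Then the decision for the next $y\in C_l$ depends only on retained $z$ with $z\in\operatorname{rlune}(x,y)$. By Step~1 every such $z$ lies in $C_l$, so the decision is identical. The base case is the empty retained set. Hence the pruning outcome on $C_l$ is a function of $C_l$ alone, and likewise for $C_r$. This justifies pruning each side separately and taking the union $R_l\cup R_r$.

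The main obstacle is a mismatch between Algorithm~\ref{alg:RAP} and the lemma's claim. The \texttt{Prune} routine checks only the geometric conditions. Within one side, however, a retained $v_j$ processed before $v_i$ has gap $|v_j.a-x.a|\le|v_i.a-x.a|$, so its attribute automatically lies between $x.a$ and $v_i.a$, and the geometric test then matches the rlune test. I will note this explicitly so the lemma covers the algorithm as well as the definition. The per-side degree cap $m/2$ also depends only on that side, so it does not break independence.
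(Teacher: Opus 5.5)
Your proposal is correct and follows the paper's argument. The key step is the same: the attribute-between clause confines every witness for $x\to y$ to the open interval between $y.a$ and $x.a$, which lies strictly on $y$'s side, so opposite-side candidates can never witness. Your Step~2 induction on retained sets and your remark on the Prune routine and the per-side cap only make explicit what the paper leaves implicit, since the paper stops after this witness-confinement observation.
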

\revRone{The proof is a direct consequence of attribute betweenness and is deferred to \rnsgappendixref{app:proof-prune-divide}.}

\revRone{Lemma~\ref{lemma:prune-divide} is not itself a pruning criterion; rather, it is a \emph{decomposition} result. It tells us that candidates on opposite sides of \(x.a\) can never witness each other, so Algorithm~\ref{alg:RAP} may safely solve two smaller one-sided pruning problems and then merge their outputs. For example, if \(x.a=4\) and a left-side candidate has attribute \(3\), then any right-side candidate with attribute \(5\) or \(6\) cannot prune edge \((x,3)\), because a valid witness would have to lie in the interval \((3,4)\).}

\revRone{In implementation, there is no need for an additional expensive pass devoted only to partitioning. While collecting candidates, we append each candidate directly to the left or right buffer, and then process each side in increasing attribute gap. We present partitioning and ordering as separate logical steps in Algorithm~\ref{alg:RAP} to make the correctness argument explicit. The linear partitioning cost is dominated by the subsequent pruning scan and does not change asymptotic complexity.}

For a node $x$, after partitioning the candidate set $C$ into $C_l$ and $C_r$, we sort each subset by the absolute difference between the candidate's attribute and $x.a$ (line~5). This ordering is crucial because Lemma~\ref{lemma:prune-order} guarantees that a candidate with a smaller attribute gap cannot be pruned by another candidate with a larger gap.
\begin{lemma}
\label{lemma:prune-order}
Let $x$ be a node and $C_l=\{v_i\in C|\ v_i.a<x.a\}$ be sorted by $|x.a-v_i.a|$ in ascending order. For any $c_i,c_j\in C_l$ with $|c_i.a-x.a|<|c_j.a-x.a|$, the edge $(x,c_i)$ cannot be pruned by $(x,c_j)$.
The case for $C_r=\{v_i\in C|\ v_i.a>x.a\}$ is symmetric.
\end{lemma}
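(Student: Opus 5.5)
The plan is to argue directly from the witness condition in Definition~\ref{def:rrng}, specialised to one side of $x$. Take $c_i,c_j\in C_l$ with $|c_i.a-x.a|<|c_j.a-x.a|$. We want to show that $c_j$ can never be a valid range-aware witness for the edge $(x,c_i)$, that is, $c_j\notin\operatorname{rlune}(x,c_i)$. It is enough to show that the attribute-between clause fails for $c_j$, whatever the geometric distances are.

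The steps, in order:

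\begin{enumerate}
\item Since $c_i,c_j\in C_l$, we have $c_i.a<x.a$ and $c_j.a<x.a$. Hence $|c_i.a-x.a|=x.a-c_i.a$ and $|c_j.a-x.a|=x.a-c_j.a$.
\item The hypothesis $|c_i.a-x.a|<|c_j.a-x.a|$ therefore becomes $c_j.a<c_i.a$.
\item The open attribute interval required of a witness for $(x,c_i)$ is $(\min\{x.a,c_i.a\},\max\{x.a,c_i.a\})=(c_i.a,x.a)$. Since $c_j.a<c_i.a$, $c_j.a$ does not lie in it, so $c_j\notin\operatorname{rlune}(x,c_i)$.
\item By the recursive rule of Definition~\ref{def:rrng}, only members of $\operatorname{rlune}(x,c_i)$ that are retained outgoing neighbours of $x$ can cause $(x,c_i)$ to be pruned. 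So the presence of $c_j$, or of the edge $(x,c_j)$, has no effect on the decision for $(x,c_i)$.
\end{enumerate}

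For $C_r$ the argument is the same with the inequalities reversed. There $c_i.a-x.a<c_j.a-x.a$ gives $c_j.a>c_i.a$, which lies outside $(x.a,c_i.a)$. Equal attribute values are excluded by the strict hypothesis and by the paper's distinct-attribute convention.

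I would also note how this fits Algorithm~\ref{alg:RAP}, so that the lemma does what the paper needs. The \texttt{Prune} procedure applies only the geometric test to nodes already in $R$. By the lemma, every node that could legitimately witness $v_i$ has a smaller attribute gap, so it appears earlier in the sorted order and has already been decided. Conversely, every node already in $R$ has a smaller gap and lies on the same side. Its attribute therefore lies strictly between $v_i.a$ and $x.a$, so the betweenness clause holds automatically and the geometric test alone is exactly the RRNG test.

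There is no real obstacle. The only care needed is to read ``pruned by $(x,c_j)$'' as ``$c_j$ serving as the rlune witness''. With that reading, the argument is purely about the attribute condition and never uses the distance inequalities.
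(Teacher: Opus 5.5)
Your proof is correct and follows essentially the same route as the paper's. On the left side the gap inequality gives $c_j.a<c_i.a$, so $c_j$ fails the attribute-between condition $c_i.a<c_j.a<x.a$ required of any witness for $(x,c_i)$, and the right side is handled symmetrically; your remark connecting this to the purely geometric test in Algorithm~\ref{alg:RAP} is correct but goes beyond what the lemma itself requires.
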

\revRone{The proof again follows from the attribute-between condition: a larger-gap candidate lies outside the interval needed to witness a smaller-gap candidate. \rnsgappendixref{app:proof-prune-order} gives the formal argument.}

\revRone{Lemma~\ref{lemma:prune-order} enables us to process nodes in ascending order of their attribute gap to $x$ (line 5). }

\revRone{For example, suppose \(x.a=4\) and two right-side candidates have attributes \(5\) and \(6\).The candidate with attribute \(6\) cannot prune the edge from \(x\) to the candidate with attribute \(5\), because a valid range-aware witness for this edge must lie between attributes \(4\) and \(5\).However, the candidate with attribute \(5\) may prune the edge from \(x\) to the candidate with attribute \(6\), since it satisfies the attribute-between condition.

Thus, Lemma~\ref{lemma:prune-order} is used to justify the ordered one-pass scan: later larger-gap candidates never invalidate decisions made for earlier smaller-gap candidates.}
This ordered processing ensures no valid edges are missed. We then perform sequential pruning within each subset, initializing a result set $R$ and inserting nodes in the sorted order (line 12). A node is pruned if it forms the longest edge with any previously added neighbor (lines 13-20). Finally, we select at most $m/2$ edges from each subset (lines 6-7), merge them (line 8), and return the final neighbor set (line 9). Here, $m$ is a hyper-parameter used to control the size of the neighborhood. Note that if we set the candidate set $C$ to the entire dataset $D$ and set $m = \infty$, then we can use Algorithm~\ref{alg:RAP} to construct the exact RRNG, indicating the correctness of our range-aware pruning strategy.

\begin{theorem}\label{thm:RAP-gets-RRNG}
For a dataset $D$, if Algorithm~\ref{alg:RAP} is applied to every object $x \in D$ with parameters $C = D$ and $m = \infty$, the resulting graph, comprising all objects and their computed neighborhood sets $NS(x)$, precisely yields the RRNG defined in Definition~\ref{def:rrng}.
\end{theorem}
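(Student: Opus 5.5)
I would argue one source node \(x\in D\) at a time and compare \(NS(x)\) from Algorithm~\ref{alg:RAP} (with \(C=D\), \(m=\infty\)) with the out-neighborhood \(N_G(x)=\{y\mid x\to y\in E\}\) of the RRNG in Definition~\ref{def:rrng}. The cap \(|R|<m\) is vacuous when \(m=\infty\) (and \(m/2=\infty\)), so \texttt{Prune} keeps \(v_i\) exactly when no already kept \(v_j\in R\) satisfies \(\delta(x,v_j)<\delta(x,v_i)\) and \(\delta(v_i,v_j)<\delta(x,v_i)\).

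\textbf{Decomposition.} By Lemma~\ref{lemma:prune-divide}, the RRNG out-neighborhood of \(x\) splits into a left part and a right part. A node on the opposite side of \(x.a\) from \(y\) never belongs to \(\operatorname{rlune}(x,y)\), since its attribute is not in \((\min\{x.a,y.a\},\max\{x.a,y.a\})\). Hence it suffices to show \(R_r=N_G(x)\cap C_r\); the case \(R_l=N_G(x)\cap C_l\) is symmetric. Distinct attribute values (with ID tie-breaking) make the sorted order of \(C_r\) strict: \(v_1,v_2,\dots\) with \(v_1.a<v_2.a<\cdots\).

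\textbf{Induction on the sorted order.} I would prove that after processing \(v_1,\dots,v_{i}\), the set \(R\) equals \(N_G(x)\cap\{v_1,\dots,v_{i}\}\). For \(v_i\), the witnesses in \(\operatorname{rlune}(x,v_i)\cap C_r\) are exactly the nodes \(z\) with \(x.a<z.a<v_i.a\) that satisfy the two distance inequalities. By the strict sorted order, these are exactly the qualifying nodes among \(v_1,\dots,v_{i-1}\); Lemma~\ref{lemma:prune-order} records the converse, that later candidates cannot witness earlier ones. Every element of the current \(R\) lies in \(\{v_1,\dots,v_{i-1}\}\), so it has attribute in \((x.a,v_i.a)\). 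The algorithm's test therefore reads: ``does there exist \(z\in\operatorname{rlune}(x,v_i)\) with \(z\in R\)?''. By the induction hypothesis, \(z\in R\) is equivalent to \(x\to z\in E\). So \(v_i\) is kept iff \(\operatorname{rlune}(x,v_i)=\emptyset\) or no \(z\in\operatorname{rlune}(x,v_i)\) has \(x\to z\in E\), which is precisely the edge condition of Definition~\ref{def:rrng}.

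\textbf{Main obstacle.} The delicate step is that Definition~\ref{def:rrng} is recursive, so ``the RRNG'' must be shown well defined before we can compare against it. I would handle this with the same ordering: because every witness for \(x\to v_i\) has strictly smaller attribute gap, the membership of \(x\to v_i\) is determined by memberships for \(v_1,\dots,v_{i-1}\). Induction on the gap then gives a unique edge set satisfying the recursion, and this unique set is the one the algorithm reproduces. Taking the union over both sides and over all \(x\in D\) yields \(E\) exactly.
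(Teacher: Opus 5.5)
Your proof is correct and follows the paper's argument. Like the paper, you split the two sides via Lemma~\ref{lemma:prune-divide} and induct over the one-sided attribute-gap order, noting that every element of $R$ has attribute in $(x.a,y.a)$, so the algorithm's distance test is exactly a membership test for $\operatorname{rlune}(x,y)\cap R$. Your explicit proof that the recursion in Definition~\ref{def:rrng} has a unique solution (by induction on the gap) is a small addition: the paper makes this point only informally, in the remark after the definition.
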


\revRone{This theorem also marks the boundary between the ideal and practical graphs. With the full dataset as the candidate set and no degree cap, RAP exactly recovers RRNG. Practical RNSG starts when this full candidate set is replaced by a bounded candidate source and the retained degree is capped; the resulting graph is therefore an approximation of RRNG, not an edge-exact implementation of the ideal graph.}

Below, we show that the time complexity of RRNG construction can be reduced from $O(n^3)$ to $O(n M_{rrng} )$ ($M_{rrng}< n^2$) by employing our fast range-aware pruning strategy, where $M_{rrng}$ is the number of RRNG edges.

\begin{theorem}\label{thm:prunning-complexity}
The time complexity of constructing an exact RRNG using Algorithm~\ref{alg:RAP} is $O(nM_{rrng} )$.
\end{theorem}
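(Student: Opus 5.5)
The plan is a direct cost count of Algorithm~\ref{alg:RAP} run on every $x\in D$ with $C=D$ and $m=\infty$. By Theorem~\ref{thm:RAP-gets-RRNG}, these calls produce exactly the RRNG, so the total cost can be charged against $M_{rrng}=\sum_{x}|NS(x)|$, where $NS(x)$ is the out-neighborhood of $x$ in the RRNG. Fix a source $x$ and split its work into three parts: partitioning, sorting, and the nested pruning loop of \texttt{Prune}.

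\textbf{Partitioning and sorting.} Partitioning $D$ into $C_l$ and $C_r$ takes $O(n)$ time. Sorting naively takes $O(n\log n)$ per source. A better route uses the convention that $D$ is already sorted by attribute. Then $C_l$ in ascending gap order is the prefix $o_{i-1},o_{i-2},\dots,o_1$, and $C_r$ is the suffix $o_{i+1},o_{i+2},\dots,o_n$. So the sorted order costs $O(n)$ per source and $O(n^2)$ overall. This is $O(nM_{rrng})$, because every node has at least one retained out-edge when $n\ge 2$: the attribute-nearest neighbour on a nonempty side has an empty rlune, so that edge is always kept. Hence $M_{rrng}\ge n-1$.

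\textbf{Pruning loop.} In \texttt{Prune}, each candidate $v_i$ is compared only against the current result set $R$. With $m=\infty$, the set $R$ only grows and never exceeds the final retained set of that side. So the inner loop costs at most $|R_l|$ (or $|R_r|$) distance checks per candidate, each $O(1)$ (or $O(d)$, treated as a constant). Summing over the at most $n$ candidates of both sides bounds the work for source $x$ by $O(n(|R_l|+|R_r|))=O(n\,|NS(x)|)$. Summing over all $x$ gives
\[
\sum_{x\in D} O\bigl(n+n|NS(x)|\bigr)=O\bigl(n^2+nM_{rrng}\bigr)=O(nM_{rrng}).
\]
Correctness of the one-pass scan, namely that comparing against $R$ alone suffices, comes from Lemmas~\ref{lemma:prune-divide} and~\ref{lemma:prune-order}, and is already packaged in Theorem~\ref{thm:RAP-gets-RRNG}.

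\textbf{Main obstacle.} The delicate step is the sorting and partition overhead, not the pruning loop. It has to be shown absorbable into $O(nM_{rrng})$. This needs both the presorted-attribute convention, so no $\log n$ factor appears, and the lower bound $M_{rrng}=\Omega(n)$. I will justify that bound by the always-kept attribute-adjacent edge argument above. If one prefers not to rely on the presorted order, the statement would only hold up to an additive $O(n^2\log n)$ term. So I will explicitly invoke the sorted-dataset assumption from Section~\ref{sec:preliminary}.
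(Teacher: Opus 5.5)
Your proposal is correct and matches the paper's proof. In both, each of the $n$ candidates of a source $x$ is compared only against the retained set, whose size never exceeds $\deg(x)$; this gives $O(n\deg(x))$ per source and $O(nM_{rrng})$ after summing. Your extra accounting of the partition and sort cost, using the presorted attribute order and the bound $M_{rrng}=\Omega(n)$ (attribute-adjacent edges are never pruned), covers overhead terms that the paper's proof absorbs without comment.
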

The proof sums the per-node pruning cost over the final RRNG degrees and is provided in \rnsgappendixref{app:proof-pruning-complexity}. 

Although our fast range-aware pruning strategy reduces the time complexity of RRNG construction to $O(nM_{rrng} )$, this cost remains prohibitive for large-scale applications. To further improve the efficiency, we present RNSG, a practical graph index that leverages our range-aware pruning strategy to efficiently approximate the RRNG.
\begin{algorithm}[t]
\setcounter{AlgoLine}{0}
\LinesNumbered
\caption{The RNSG Construction Algorithm}
\label{alg:rnsg}
\small
\KwIn{
Dataset $D$, spatial candidate size $ef_\mathrm{spatial}$, attribute window size $ef_\mathrm{attribute}\ge 1$, and maximum edges $m$}
\KwOut{A RNSG $G=(V,E)$}

\SetKwFunction{RNSGConstruct}{RNSGConstruct}
\SetKwProg{Fn}{Function}{}{}

\Fn{\RNSGConstruct{$D, ef_\mathrm{spatial}, ef_\mathrm{attribute}, m$}}{
    Obtain an approximate $ef_\mathrm{spatial}$-NN candidate graph $G_s=(V,E_s)$ from $D$\;
Let $\{v_1, \cdots, v_n\}$ be sorted in ascending order by the nodes' numeric attributes\;
    $E \leftarrow \varnothing$\;

    \For{$i \leftarrow 1$ \KwTo $n$}{
        $C \leftarrow \mathrm{Neighbour}_{G_s}(v_i)$\;
        \For{$j \in [\max(1,i - ef_\mathrm{attribute}),\, \min(n,i + ef_\mathrm{attribute})]\setminus\{i\}$}{
            $C.\mathrm{add}(v_j)$\;
        }

        $N_i \leftarrow$ \RRNGPrune{$v_i, C, m$}\;
        \ForEach{$x \in N_i$}{
            $E \leftarrow E \cup \{(v_i, x)\}$\;
        }
    }

    \Return $G=(V,E)$\;
}
\end{algorithm}
\subsection{The RNSG Construction Algorithm} \label{subsec:rnsg-construction-alg}
\revRoneStart
Algorithm~\ref{alg:rnsg} details the construction of the RNSG index. The process begins by building a candidate neighborhood for each node $x$ that captures both spatial and attribute proximity. Spatially, we employ the classic NNDescent algorithm~\cite{2011nn-descent} to construct an approximate $ef_\mathrm{spatial}$-NN candidate graph (line~2), which identifies neighbors based on the distance function $\delta$. This graph is used only as a candidate source: its raw edges are not kept in the final RNSG unless they survive range-aware pruning. For a fixed dataset, distance metric, and KNNG degree, this candidate graph is a dataset-level spatial structure rather than an RNSG parameter choice. It can therefore be built once and reused to construct multiple RNSG variants with different range-window or pruning parameters, and it may also serve other graph-index construction pipelines that consume kNN candidates. After the final RNSG is produced, the temporary KNNG can be discarded; the deployed RNSG index stores only the retained RNSG graph and its query-time auxiliary structures.
For attribute proximity, we do not use an attribute-value radius. Instead, after sorting all objects by their numeric attribute, each node $v_i$ adds the $ef_\mathrm{attribute}$ predecessors and the $ef_\mathrm{attribute}$ successors around its attribute rank (lines~3 and 7--8). Thus, $ef_\mathrm{attribute}$ is an \emph{attribute-order window size}: it controls how many nearby objects in the sorted attribute order are exposed to the pruning stage. The final candidate neighbor set for each node $v$ is formed by merging its \textit{spatial candidates} from the approximate KNNG, whose degree is $ef_\mathrm{spatial}$, with these \textit{attribute-window candidates} (lines~3-8). 
Formally, for each node \(x\), let \(C_x^{sp}(V)\) be the spatial candidates generated from the dataset-level approximate KNNG and let \(C_x^{attr}(V)\) be the two-sided attribute-window candidates in the global attribute order. RNSG prunes the fixed candidate source \(C_x=C_x^{sp}(V)\cup C_x^{attr}(V)\). Relative to exact RRNG, approximation can arise from three bounded-resource effects: an ideal RRNG edge endpoint may be absent from \(C_x\), an ideal pruning witness may be absent from \(C_x\), or a valid candidate may be truncated by the finite degree budget \(m\). This is the practical gap between RRNG and RNSG.
Subsequently, we apply Algorithm~\ref{alg:RAP} to prune this candidate set, obtaining the final neighbor list for each node (line 9). Finally, the RNSG is obtained from the union of all retained edges (lines~10-12). The following theorem analyzes the RNSG-specific pruning and edge-generation cost after the spatial candidate graph is available; in the experiments, we report the KNNG and RNSG stages together as total construction time.
\reviseend


\begin{theorem} \label{thm:rnsg}
Given the spatial candidate graph, the RNSG-specific pruning and edge-generation stage in Algorithm~\ref{alg:rnsg} takes $O(|C|M_{rnsg})$ time, where $M_{rnsg}$ is the number of edges in the final RNSG and $|C|$ denotes the maximum neighborhood size. 
\end{theorem}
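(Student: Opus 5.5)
We prove Theorem~\ref{thm:rnsg} by charging the work of each loop iteration of Algorithm~\ref{alg:rnsg} either to a candidate (the cheap, linear part) or to a pair consisting of a candidate and a retained edge (the dominant part), and then summing over all nodes.

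Fix the $i$-th iteration with source node $v_i$. Gathering the candidate set takes $O(ef_\mathrm{spatial}+2ef_\mathrm{attribute}) = O(|C|)$ time: we read the KNNG adjacency list of $v_i$ and append the attribute window, which is contiguous in the sorted order. Splitting $C$ into $C_l$ and $C_r$ inside \texttt{RRNGPrune} also takes $O(|C|)$, since each candidate is routed directly to its buffer as it is collected.

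For the ordering step we avoid a comparison sort. Each attribute-window candidate has a known rank offset $j-i$, and rank order coincides with attribute-gap order. The spatial candidates are sorted once when the KNNG is built, or we fold sorting into $|C|$ by treating $|C|$ as a bounded constant parameter. To stay safe, we state that sorting contributes $O(|C|\log|C|)$ per node. Since every node has at least one retained edge (whenever it has a nonempty side, the first candidate on that side is never pruned), this is at most $O(|C|\log|C|\cdot M_{rnsg})$. If a tighter accounting is needed, we charge the sort to the scan below under the assumption $\log|C| \le |C|$ per retained edge. Justifying the sort cost against $M_{rnsg}$ without an extra log factor is the step I expect to be the main obstacle.

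The core of the argument is the cost of \texttt{Prune}. For each candidate $v$ in $C_l$ (and likewise $C_r$), the inner loop compares $v$ against the current retained set $R$, which only grows and is always a subset of the final neighbor set $N_i$. So each candidate costs at most $O(|N_i^{side}|+1)$ distance evaluations, and the total cost for node $v_i$ is
\[
O\bigl(|C|\,(|N_i|+1)\bigr).
\]
Adding the edge-insertion loop contributes $O(|N_i|)$. Summing over $i$ gives
\[
O\Bigl(\sum_i |C|\,(|N_i|+1)\Bigr) = O\bigl(|C|\,(M_{rnsg}+n)\bigr).
\]

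Finally, we absorb the $n$ term. Whenever $n\ge 2$ every node has some candidate (the attribute window is nonempty because $ef_\mathrm{attribute}\ge1$), and the first candidate on a nonempty side is always kept. Hence $|N_i|\ge 1$ for every $i$, so $M_{rnsg}\ge n$, and the total is $O(|C|\,M_{rnsg})$, as claimed. This mirrors the accounting referenced for Theorem~\ref{thm:prunning-complexity}, with $n$ replaced by the bounded candidate size $|C|$.
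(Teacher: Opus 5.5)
Your charging argument is the paper's. Each of the at most $|C|$ candidates of $v_i$ is tested only against the current retained set, which is contained in the final neighbor set, and summing over nodes gives $O(|C|M_{rnsg})$. You are more careful than the paper on one point. The paper bounds the per-node cost by $O(|C|\cdot\deg_{RNSG}(v))$, which tacitly needs $\deg_{RNSG}(v)\ge 1$ to pay for candidates examined while $R$ is still empty. Your $O(|C|(|N_i|+1))$ bound, together with the observation $M_{rnsg}\ge n$ (valid for $n\ge 2$, $ef_\mathrm{attribute}\ge 1$ and a per-side budget of at least one), makes that explicit.

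The step that does not go through is the ordering in line~5 of Algorithm~\ref{alg:RAP}, which the paper's proof does not charge at all. None of your three ways of handling it proves the stated bound:
\begin{itemize}
\item The ``safe'' bound $O(|C|\log|C|\cdot M_{rnsg})$ is a weaker theorem.
\item Treating $|C|$ as a constant erases exactly the factor the statement makes explicit.
\item Charging the sort to the scan would need $\log|C|=O(\deg(v_i)+1)$ at every node, and nothing guarantees this. If the vectors lie on a line and the attribute is the coordinate along it, the first candidate on each side witnesses every later one, so $\deg(v_i)\le 2$ while $|C|$ is arbitrary. That $\log|C|\le|C|$ is true but irrelevant.
\end{itemize}
The missing observation is that the sort key is the attribute rank, an integer in $\{1,\dots,n\}$, so no comparison sort is needed. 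Put all KNNG pairs $(v_i,u)$ into buckets indexed by the rank of $u$. Then sweep the buckets in increasing rank and append each $u$ to the list of its owner $v_i$. This orders all $n$ spatial candidate lists in $O(n\,ef_\mathrm{spatial}+n)$ total time. The attribute-window candidates already form a contiguous rank range. A linear merge with deduplication, a split at rank $i$, and a reversal of the left part then give both gap-ordered sides in $O(|C|)$ per node. The ordering therefore costs $O(n|C|)$ overall, and your own inequality $M_{rnsg}\ge n$ absorbs it into $O(|C|M_{rnsg})$.
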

This follows by summing the \(O(|C|\cdot \deg_{RNSG}(v))\) pruning cost over all nodes; the full derivation is in \rnsgappendixref{app:proof-rnsg-build-time}.
In Theorem~\ref{thm:rnsg}, $|C|$ is clearly bounded by $2ef_\mathrm{attribute} + ef_\mathrm{spatial}$ (where $ef_\mathrm{spatial}$ is from NNDescent~\cite{2011nn-descent}. Since $|C| \ll n$ in practice, our method is highly efficient and scalable. 

\revRone{\noindent\textbf{Structural Properties of RNSG.} We next state the properties of the practical graph built by Algorithm~\ref{alg:rnsg}. These properties should be read together with the approximation statement above: RNSG is not the exact RRNG, but its construction still preserves the interval structure needed by range-filtered search under the fixed candidate source used to build the graph.}

\begin{theorem}[Interval Reachability]\label{thm:strongly connected}
Assume \(ef_\mathrm{attribute}\ge 1\) and the per-side retained-degree budget in Algorithm~\ref{alg:RAP} is at least one. The RNSG graph $G$ and any range-induced subgraph $G[I]$ are both strongly connected.
\end{theorem}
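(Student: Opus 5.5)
The plan is to show that every node $v_i$ keeps an outgoing edge to its immediate predecessor $v_{i-1}$ and to its immediate successor $v_{i+1}$ in the global attribute order. Once this holds, the RNSG contains the bidirectional ``attribute chain'' $v_1 \leftrightarrow v_2 \leftrightarrow \cdots \leftrightarrow v_n$.

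Any range-induced subgraph $G[I]$ has vertex set $V_I$, which is a contiguous block $v_L,\dots,v_R$ of this order, as noted in the preliminaries. The induced subgraph therefore contains the sub-chain $v_L \leftrightarrow \cdots \leftrightarrow v_R$. Every pair of vertices in $V_I$ is thus mutually reachable inside $G[I]$, which gives strong connectivity of $G[I]$. The case $I$ covering all attributes gives strong connectivity of $G$ itself.

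The key step is to verify that the chain edges survive Algorithm~\ref{alg:rnsg}. There are three checks.

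\emph{Candidates.} Because $ef_\mathrm{attribute}\ge 1$, lines~7--8 insert $v_{i-1}$ and $v_{i+1}$ (when they exist) into $C$ for $v_i$. Since attributes are distinct, $v_{i-1}\in C_l$ and $v_{i+1}\in C_r$.

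\emph{Processed first.} Within $C_r$, the element $v_{i+1}$ has the strictly smallest attribute gap $|v.a-v_i.a|$ among all candidates, since no object has an attribute strictly between $v_i.a$ and $v_{i+1}.a$. After the sort in line~5 it is therefore the first element scanned by \texttt{Prune}. At that point $R=\varnothing$, so \emph{keep} stays True. The degree condition $|R|<m/2$ also holds, because the per-side budget is assumed to be at least one. Hence $v_{i+1}$ is retained. The left side is symmetric.

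\emph{Not pruned in the ideal sense either.} This is consistent with Lemma~\ref{lemma:prune-order}: a valid range-aware witness would need an attribute in $(v_i.a, v_{i+1}.a)$, and that interval is empty. The edge is therefore never prunable. This argument does not even rely on the geometric test.

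The main subtlety, which I expect to be the only real obstacle, concerns the practical pruning test. Algorithm~\ref{alg:RAP} checks only the geometric condition against previously retained same-side neighbors, and it does not explicitly re-check attribute betweenness. So I have to argue that order alone protects $v_{i\pm1}$. This follows because nothing precedes $v_{i\pm1}$ in its side's sorted list. Ties are impossible because attributes are distinct, with duplicates broken deterministically by ID into a strict total order.

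Finally, the paths stay inside the induced subgraph. For $u,w\in V_I$ with $u$ before $w$, the path $u=v_p\to v_{p+1}\to\cdots\to v_q=w$ uses only vertices with ranks in $[p,q]\subseteq[L,R]$, so all of its vertices and edges lie in $G[I]$. The path from $w$ back to $u$ is the same argument with predecessor edges, which completes strong connectivity.
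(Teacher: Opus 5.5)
Your proposal is correct and follows the paper's own proof. Both arguments rest on the same three steps: the bidirectional chain of attribute-adjacent edges guaranteed by $ef_\mathrm{attribute}\ge 1$, the fact that no attribute-between witness can exist for such an edge, and the contiguous sub-chain that survives inside $V_I$. Your one addition is to note explicitly that $v_{i\pm1}$ is the first element scanned on its side, when $R=\varnothing$, so neither the purely geometric test in \texttt{Prune} nor the $m/2$ cap can reject it; this makes precise what the paper's ``cannot be pruned by an attribute-between witness \ldots\ budget at least one'' step leaves implicit.
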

\begin{proof}
\revRoneStart
In Algorithm~\ref{alg:rnsg}, let the nodes be ordered by increasing attribute as \(v_1,\ldots,v_n\). Because $ef_\mathrm{attribute}\ge 1$, line~7 inserts $v_{i+1}$ into the candidate set of $v_i$ and inserts $v_i$ into the candidate set of $v_{i+1}$ for every adjacent pair. Moreover, no node has an attribute strictly between $v_i.a$ and $v_{i+1}.a$, so neither edge \((v_i,v_{i+1})\) nor edge \((v_{i+1},v_i)\) can be pruned by an attribute-between witness. Since the per-side retained-degree budget is at least one, these adjacent candidates are retained. Hence every adjacent pair in the sorted order remains connected, forming a bidirectional chain over the attribute order. Any interval-induced subgraph preserves the corresponding contiguous sub-chain, and is therefore also strongly connected.
\reviseend
\end{proof}

The strong connectivity in Theorem~\ref{thm:strongly connected} is a reachability guarantee, not a guarantee of short monotone paths or high search efficiency. It follows from the attribute-adjacent chain introduced by the attribute-window candidate source, so it does not require a large degree such as \(m=300\). Query efficiency still depends on spatial candidates, range-aware pruning, and the finite degree budget. The next result states the precise sense in which the practical RNSG construction preserves interval-restriction behavior.

\begin{theorem}[Fixed-Candidate Interval Restriction]\label{thm:rnsg-induce}
Let $G=(V,E)$ be the RNSG constructed by Algorithm~\ref{alg:rnsg} on dataset $D$. For each node $x$, let \(C_x=C_x^{sp}(V)\cup C_x^{attr}(V)\) be the fixed candidate source generated before Algorithm~\ref{alg:RAP} is applied. For any interval $I=[a_l,a_r]$, let \(V_I=\{x \in V \mid x.a \in I\}\), let \(G[I]\) be the subgraph of \(G\) induced by \(V_I\), and define \(C_x^{I,\mathrm{fix}}=C_x\cap V_I\). Let \(H[I]\) be the graph obtained by applying Algorithm~\ref{alg:RAP} to every \(x\in V_I\) using \(C_x^{I,\mathrm{fix}}\) and the same degree bound \(m\), without regenerating spatial or attribute-window candidates inside \(V_I\). Then \(G[I]=H[I]\).
\end{theorem}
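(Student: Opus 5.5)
The plan is to compare, for each fixed source node \(x\in V_I\), the out-neighborhood that Algorithm~\ref{alg:RAP} produces from \(C_x\) with the one it produces from \(C_x^{I,\mathrm{fix}}=C_x\cap V_I\). I will show that the two agree on every node of \(V_I\). Since the edges of \(G[I]\) leaving \(x\) are exactly the retained neighbors of \(x\) (computed from \(C_x\)) that lie in \(V_I\), and \(H[I]\) only ever produces neighbors in \(V_I\), this gives \(E'=E_I\) edge by edge. The argument mirrors the proof of Theorem~\ref{thm:rrng-closure}. The difference is that it is phrased directly in terms of the sequential \texttt{Prune} scan, because that scan is what actually defines RNSG, including the degree cap.

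First I split by side. By Lemma~\ref{lemma:prune-divide}, and by the structure of \texttt{RRNGPrune}, the left and right scans are independent, and each receives the same budget \(m/2\) in both constructions. So it suffices to treat the right side \(C_r\); the left side is symmetric. Write \(C_r\) sorted by increasing gap \(|v.a-x.a|\). Attributes are distinct (as assumed in the Preliminaries), so this order is strict and coincides with increasing attribute value.

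The key observation is that \(I\) is an interval and \(x.a\in I\). Hence \(C_r\cap V_I\) is exactly the set of right-side candidates with attribute in \((x.a,a_r]\). In the sorted order this set is a \emph{prefix} of \(C_r\): every in-range right candidate comes before every out-of-range one. The sorted list that \texttt{Prune} receives in \(H[I]\) is therefore literally an initial segment of the list it receives in \(G\).

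Next I run an induction along this common prefix. The claim is that before processing the \(k\)-th element \(y\), the retained set \(R\) is identical in both runs. The base case is \(R=\varnothing\). For the step, note that the keep/prune decision for \(y\) depends only on three things: the current \(R\) (through the geometric test against each \(v_j\in R\)), the distances involving \(x\), \(y\) and those \(v_j\), and the cap test \(|R|<m/2\). All three coincide by the inductive hypothesis, so \(y\) receives the same decision and \(R\) is updated identically.

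After the prefix ends, the \(G\)-run continues only over out-of-range candidates. Whatever it retains there is an edge leaving \(V_I\) and is discarded when forming \(G[I]\). So the in-range retained sets coincide. Applying this to both sides and to every \(x\in V_I\) gives \(G[I]=H[I]\).

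The step I expect to need the most care is the interaction with the finite degree cap and the fact that \texttt{Prune} checks only the geometric condition, not attribute betweenness. Truncation by \(|R|<m/2\) could in principle break the equality if out-of-range candidates consumed budget before in-range ones. I will handle this with the prefix property: out-of-range candidates never precede in-range ones on the same side, so they cannot consume budget or act as witnesses for in-range candidates. The same property also shows that every \(v_j\in R\) used as a witness for an in-range \(y\) automatically lies strictly between \(x.a\) and \(y.a\), which is consistent with Lemma~\ref{lemma:prune-order}. Finally, I will note explicitly that candidates are not regenerated in \(H[I]\); this is what allows the statement to hold despite \(C_x\) being built from the global KNNG and attribute window.
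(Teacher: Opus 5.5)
Your proposal is correct and takes essentially the same approach as the paper's proof. Both use per-side independence via Lemma~\ref{lemma:prune-divide}, the fact that in-range same-side candidates form a prefix of the gap-sorted scan (so witnesses and the $m/2$ budget are consumed identically in both runs), and the discarding of the out-of-range tail when forming $G[I]$, with your induction on the retained set $R$ being a more explicit rendering of the paper's identical-prefix argument.
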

\begin{proof}
\revRoneStart
Fix a node \(x\in V_I\). We show that the retained neighbor set of \(x\), when restricted to \(V_I\), is the same as the retained neighbor set produced by running Algorithm~\ref{alg:RAP} on \(C_x^{I,\mathrm{fix}}\). Consider the right side of \(x\); the left side is symmetric. If \(y\in C_x^{I,\mathrm{fix}}\) and \(y.a>x.a\), then every right-side candidate processed before \(y\) has attribute between \(x.a\) and \(y.a\), and therefore also lies in \(V_I\). Thus, before each in-range candidate is tested, the previously kept right-side set is the same in the full and restricted runs. Moreover, any valid RRNG witness for pruning \((x,y)\) must have attribute between \(x.a\) and \(y.a\), and is therefore in \(V_I\). Consequently the prune-or-keep decision for every in-range candidate is identical in the two runs. The degree cap \(m/2\) on this side is reached at the same time because all candidates that can be processed before an in-range candidate are themselves in range; once out-of-range right-side candidates appear, no later right-side candidate can return to the interval. Therefore the right-side retained set commutes with interval restriction, and so does the left-side retained set. Merging the two sides proves that the final neighbor set of every \(x\in V_I\) is identical in \(G[I]\) and \(H[I]\), and hence \(G[I]=H[I]\).
\reviseend
\end{proof}

Theorem~\ref{thm:rnsg-induce} is a fixed-candidate interval-restriction result. It does not claim that \(G[I]\) is identical to a fresh range-specific rebuild that regenerates a KNNG and attribute windows only on \(V_I\). Such a rebuild may recover additional in-range spatial candidates because some full-dataset KNNG neighbors of a node lie outside \(I\); at the same time, as intervals become smaller, attribute-window candidates form a larger relative share of the available in-range candidate source. Practical RNSG makes this trade-off once: it uses one reusable graph whose range-aware pruning decisions commute with interval restriction under the fixed candidate source, rather than rebuilding a separate graph for every query range~\cite{engels2024approximate,zuo2024serf,xu2024irangegraph}. Exp-5 quantifies this finite-candidate gap empirically.

\noindent\textbf{Extension to Multiple Ordered Attributes.}
The theory above focuses on a single totally ordered attribute because this is the standard RFANN setting and gives the cleanest interval-heredity statement. The same pruning principle \revRone{extends dimension by dimension to any fixed number} \(d\) \revRone{of ordered numeric attributes whose query predicates are conjunctions of ranges.} Let \(a_j(x)\) be the \(j\)-th numeric label of object \(x\). For an edge \((x,y)\), define \(l_j=\min\{a_j(x),a_j(y)\}\) and \(u_j=\max\{a_j(x),a_j(y)\}\). The attribute condition for a pruning witness \(z\) becomes \revRone{coordinate-wise betweenness}:
\[
  a_j(z) \in [l_j,u_j]
  \quad \text{for all } j\in\{1,\ldots,d\},
\]
in addition to the usual geometric witness condition. In the two-attribute case, this is the \revRone{rectangle-between rule}; for larger \(d\), it becomes the corresponding \revRone{axis-aligned hyperrectangle-between rule}. Candidate generation likewise adds attribute-window candidates along each sorted numeric attribute order, in addition to the spatial KNNG candidates, and query processing applies all range predicates during traversal. The structural reason is that if two endpoints satisfy a query box and a witness is coordinate-wise between them, then the witness also satisfies that query box. Thus the same witness-preservation intuition behind one-dimensional interval filtering carries over to conjunctions of numeric range filters. \revRone{The main theory remains one-dimensional;} \revRone{Exp-7 validates the two-attribute extension.}


\subsection{Query Processing Algorithm With RNSG} \label{subsec:query-pro-rnsg}
RFANN query processing on RNSG uses beam search, following standard graph-based ANN indexes~\cite{wang2021comprehensive,25SIGMODgraphindexsurvey}. Crucially, for any range query $I$, the algorithm does not explicitly materialize the induced subgraph. Instead, the search is performed on the entire RNSG, with the range filter applied on-the-fly during traversal to prune edges that lead to out-of-range nodes. This differs from conventional \emph{in-filter} methods. Traditional in-filter approaches often suffer from degraded navigability and even graph disconnection, as range filters prune edges outside the query interval, drastically reducing the effective connectivity of the search graph. In contrast, RNSG is supported by two structural facts established above: every interval-induced subgraph remains strongly connected (Theorem~\ref{thm:strongly connected}), and range-aware pruning decisions among in-range candidates are stable under the fixed candidate source (Theorem~\ref{thm:rnsg-induce}). Together, these results explain why on-the-fly range filtering in RNSG is much less disruptive than filtering generic ANN graphs.

\noindent\textbf{Entry Node Generation.} As observed in previous studies \cite{25SIGMODgraphindexsurvey, xu2024irangegraph,wang2017survey}, the starting nodes (entry nodes) of the beam search affect search performance. Traditional graph-based ANN methods typically use the \textit{center node}, i.e., the node closest to the centroid of the entire dataset~\cite{wang2017survey,fu2017fast}. For RFANN queries, however, \revRone{the best centroid-like entry changes with the range constraint.} Precomputing such an entry for every possible range would require impractical $O(n^2)$ space.

To avoid this quadratic cost, we use a compact \revRone{\emph{header} entry policy.} The structure is precomputed once by scanning objects in attribute order and keeping suffix minima with respect to the distance to the global centroid. At query time, for a range $[a_l,a_r]$, we look up the stored list associated with the right endpoint and choose the first object that also satisfies the left endpoint. This adds only a lightweight lookup before beam search. \revRone{The implementation stores the header as compact arrays embedded in the deployed graph file, not as a separate range-specific graph or cover index.} The detailed procedure and example are given in \rnsgappendixref{app:entry-demo}. \revRone{The independence assumption} in Lemma~\ref{lem:entrysize} is used only to explain the expected size of this entry structure; \revRone{it is not required for correctness,} because beam search can start from any in-range node.

\begin{lemma}\label{lem:entrysize}
Assume that the vector embedding distribution is independent of the attribute distribution. Then the expected number of distinct entry nodes required for all query ranges with a fixed right endpoint $a_r$ is $O(\log n)$.
\end{lemma}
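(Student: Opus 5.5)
With the right endpoint $a_r$ fixed, I would reduce the entry structure to a record-value count over a random permutation. Let $o_1,\dots,o_R$ be the objects with attribute at most $a_r$, in attribute order. For a query range $[a_l,a_r]$, the entry is the in-range object closest to the global centroid $c$: the minimizer of $\delta(o_i.v,c)$ over $i\in[L,R]$, where $L$ is the smallest index with $o_L.a\ge a_l$. As $a_l$ ranges over all values, this entry is always the suffix minimum $\arg\min_{L\le i\le R}\delta(o_i.v,c)$ for some $L$. So the set of distinct entry nodes is exactly the set of indices $i$ such that $\delta(o_i.v,c)<\delta(o_j.v,c)$ for all $j$ with $i<j\le R$. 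These are the right-to-left records of the sequence $X_i=\delta(o_i.v,c)$. Ties would be broken by object ID, as the paper does for attributes. The first step is to state this characterization carefully, using the header construction that keeps suffix minima.

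The second step uses the independence assumption. If the embedding distribution is independent of the attribute distribution, then conditioned on the attribute order, the values $X_1,\dots,X_R$ are exchangeable. With ties broken deterministically, or under the standard assumption that distances are almost surely distinct, the relative ranking of the $X_i$ is a uniformly random permutation. Hence, for each position $i$, the probability that $X_i$ is the minimum among the $R-i+1$ values $X_i,\dots,X_R$ is exactly $1/(R-i+1)$.

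Linearity of expectation then gives the expected number of distinct entries as
\[
\sum_{i=1}^{R}\frac{1}{R-i+1}=H_R\le 1+\ln R\le 1+\ln n=O(\log n).
\]

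The main obstacle is justifying exchangeability cleanly. Independence of vectors and attributes alone gives exchangeability only if the objects are sampled i.i.d. from the joint distribution, so I would state that implicit i.i.d. sampling explicitly. I would also note that the centroid $c$ depends on all vectors, but it is a symmetric function of them, so it does not break exchangeability. If needed, I would condition on the multiset of vectors, which fixes $c$, and argue that the assignment of those vectors to attribute ranks is a uniform permutation.
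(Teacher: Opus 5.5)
Your proposal is correct and is essentially the paper's argument: the entries for a fixed right endpoint are record minima of the centroid-distance sequence in attribute order, each position is a record with probability $1/t$, and linearity of expectation gives $H_R\le 1+\ln n$. Your version is slightly more careful than the paper's in two places. The paper indexes prefix minima $\arg\min_{1\le s\le t}$ even though $I_t=[v_t.a,a_r]$ calls for suffix minima; this is harmless because the two record counts are equidistributed under reversal. The paper also simply asserts that the distances are i.i.d., whereas your exchangeability argument correctly handles the dependence introduced by the centroid $c$.
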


\revRone{The structure is a query-time accelerator, not a correctness prerequisite;} the ablation studies its practical effect.


\begin{table}[t]
\small
    \centering
    \vspace*{-3em}    
    \caption{\revRone{Dataset and Workload Statistics}}
    \label{tab:dataset}
    \vspace*{-0.3em}
    \renewcommand{\arraystretch}{1.2}
    \setlength{\arrayrulewidth}{0.8pt}
    \begin{tabularx}{0.45\textwidth}{
        |>{\centering\arraybackslash}p{1.7cm}
        |>{\centering\arraybackslash}X
        |>{\centering\arraybackslash}X   
        |>{\centering\arraybackslash}X        
        |>{\centering\arraybackslash}p{1.5cm}|
        }
        \hline
        \textbf{Dataset} & $|D|$& \#Query & $d$ & Vector Type \\
        \hline 
        YT-Audio& 1M & 1,000 &128  &audio \\        
        YT-RGB&  1M & 1,000 &1,024  &video \\        
        WIT&  1M & 1,000 &2,048  & image \\
        {TripClick}& {1M} & {1,000} &{768} &{text} \\
        SIFT1M & 1M & 10,000 & 128 & image   \\
        GIST1M& 1M & 1,000 &960  &image   \\
        {SpaceV10M}& 10M & 1,000 &100 &image \\
        {SIFT-scale}& 10--50M & 10,000 &128 &image \\
        \hline        
    \end{tabularx}
\end{table}
\section{Experiments} \label{sec:exp}

\begin{figure*}[!t]
\centering
\vspace*{-3em}
\includegraphics[width=0.9\textwidth]{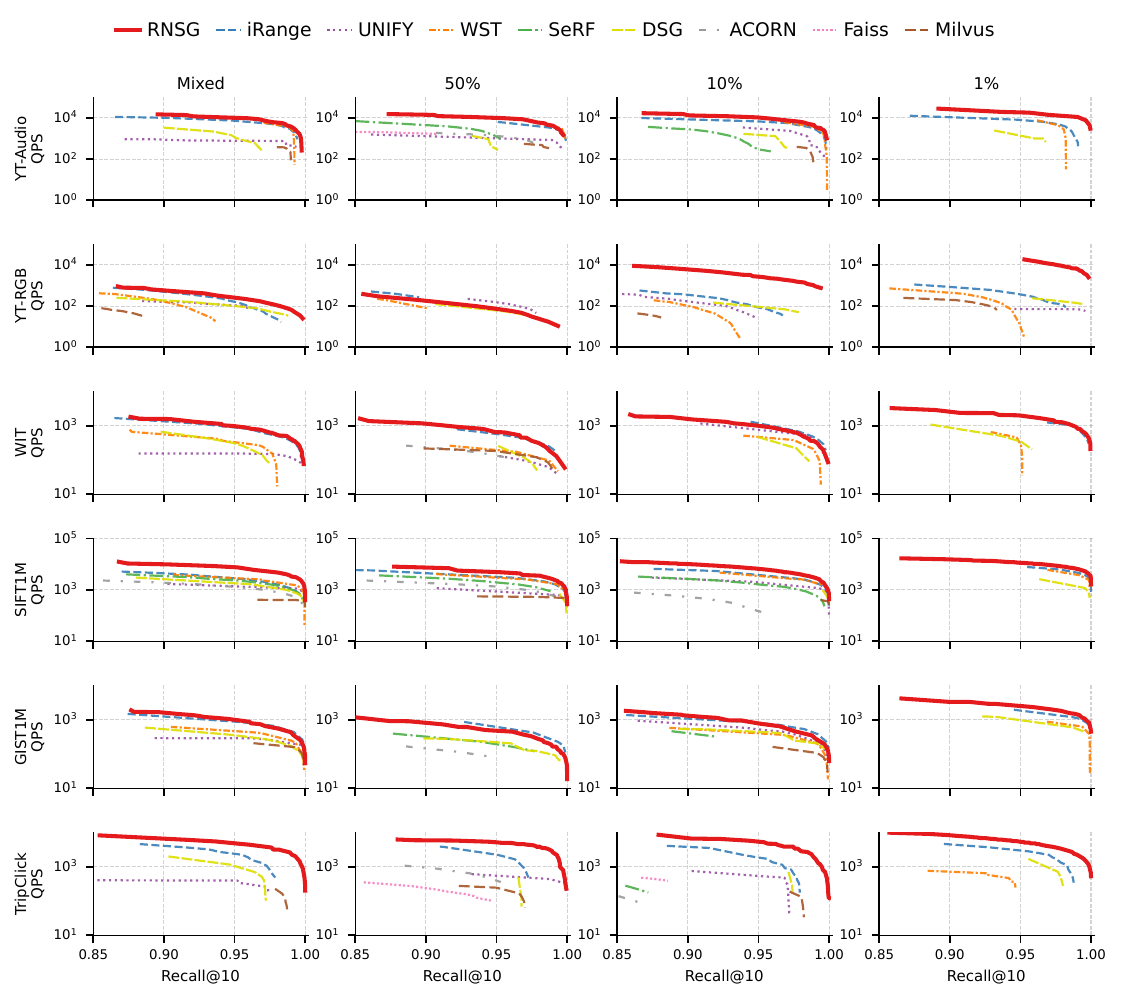}
\vspace*{-1em}
\caption{Recall-QPS curves on six datasets under mixed, 1\%, 10\%, and 50\% selectivity workloads (upper and right is better).}
\vspace*{-1em}
\label{fig:exp-acc-eff}
\end{figure*}
\subsection{Experimental Setup} \label{subsec:exp-setup}
\noindent\textbf{Datasets.} Following previous RFANN studies~\cite{xu2024irangegraph,liang2024unify,zuo2024serf,li2025attribute}, we use \revRone{6} widely-used real datasets as the 
benchmark: YouTube-Audio 
(YT-Audio for short), YouTube-RGB\footnote{\url{https://research.google.com/youtube8m/download.html}} (YT-RGB for short), 
WIT-Image\footnote{\url{https://github.com/google-research-datasets/wit}} (WIT for short),  
\revRone{TripClick}\footnote{\url{https://tripdatabase.github.io/tripclick/}}, SIFT1M and GIST1M\footnote{\url{http://corpus-texmex.irisa.fr/}}. \revRone{Table~\ref{tab:dataset} also includes SpaceV10M and {SIFT-scale}, two larger datasets used for scalability evaluation.} 
For numeric attribute assignment, we maintain consistency with previous studies~\cite{xu2024irangegraph,li2025attribute}: YT-Audio uses video publication time, YT-RGB employs \textit{like} counts, WIT utilizes image size, \revRone{TripClick uses timestamp}, and SIFT1M and GIST1M use synthetic numeric attributes generated by random distributions following the methodology used in~\cite{li2025attribute}. 
\noindent\textbf{Query Ranges.} Consistent with prior research~\cite{xu2024irangegraph}, we note that despite the widespread industrial adoption of RFANN queries (e.g., at Apple~\cite{pound2025micronn}, Milvus~\cite{wang2021milvus}, and Alibaba~\cite{wei2020analyticdb}), no benchmark datasets and query ranges are publicly available due to privacy considerations. Following established evaluation methodologies~\cite{xu2024irangegraph,li2025attribute}, we generate synthetic query ranges to assess performance. For a fixed selectivity \(s\), each query range is a contiguous interval in the sorted attribute order containing approximately an \(s\) fraction of the dataset, and all methods use the same range files and exhaustive ground-truth results. We evaluate queries through two query workloads: (1) a mixed workload for overall performance evaluation, using the fixed mixed-selectivity query set; and (2) fixed selectivity workloads for specific performance analysis. The fixed-selectivity curves report 1\%, 10\%, and \revRone{50\%} selectivity to evaluate method performance across low-, medium-, and \revRone{high-selectivity} RFANN workloads.

\noindent\textbf{Evaluation Metrics.} Following previous studies~\cite{wang2021comprehensive,wang2017survey}, we use two standard metrics to assess method performance. Accuracy is measured by $\text{recall@}k = \frac{|\mathcal{R}\cap \tilde{\mathcal{R}}|}{k}$, where $\mathcal{R}$ represents the result set retrieved by the evaluated method and $\tilde{\mathcal{R}}$ denotes the ground-truth set obtained through exhaustive linear scan. Consistent with recent RFANN studies~\cite{liang2024unify,xu2024irangegraph,li2025attribute}, we primarily report recall@10 to facilitate direct comparison with state-of-the-art methods, \revRone{and Exp-7 further compares methods while varying} $k$. For efficiency measurement, we use queries per second (QPS), calculated as $\text{QPS} = \frac{|Q|}{t}$, where $|Q|$ queries are processed within time $t$.  

\noindent\textbf{Compared Methods.} We evaluate our RNSG method against 8 representative baselines for RFANN search: (1) iRangeGraph~\cite{xu2024irangegraph}: Integrates segment trees with graph indexes, constructing elemental graphs for multiple ranges and dynamically composing them during query processing (Section~\ref{sec:preliminary}). (2) UNIFY~\cite{liang2024unify}: Partitions data by numeric attributes, builds intra-segment graphs, and employs skip lists with edge bitmaps for efficient range pruning (Section~\ref{sec:preliminary}). (3) SuperPostfiltering~\cite{engels2024approximate}, denoted WST in the plots: a post-filtering approach that pre-builds graph indexes for multiple overlapping ranges and selects the minimal covering range at query time (Section~\ref{sec:preliminary}). (4) SeRF~\cite{zuo2024serf}: Employs aggressive graph compression to reduce indexing costs while maintaining search performance. We use the MaxLeap variant reported in the original paper. (5) DSG~\cite{peng2025dynamic}: A recent graph method for dynamic RFANN workloads that uses rectangle trees for range partitioning and builds upon SeRF-style compression. We include DSG because it is conceptually close and construction-efficient, while noting that it is primarily designed for dynamic rather than static workloads. (6) ACORN-$\gamma$~\cite{patel2024acorn}: A predicate-agnostic hybrid ANN search method applicable to arbitrary filters, though less optimized for RFANN queries specifically. (7) Faiss-IVFPQ~\cite{douze2024faiss}: A widely-used vector database baseline that filters out-of-range data points during retrieval using inverted file indexing with product quantization. (8) Milvus-HNSW~\cite{wang2021milvus}: Employs HNSW indexing with automatic strategy selection among pre-filtering, in-filtering, and post-filtering approaches based on cost estimation (Section~\ref{sec:preliminary}). Among the evaluated methods, iRangeGraph~\cite{xu2024irangegraph} and UNIFY~\cite{liang2024unify} represent the current state-of-the-art in RFANN search, as established by a recent benchmarking study~\cite{li2025attribute}. We also note two recent methods, RangePQ~\cite{zhang2025efficient} and DIGRA~\cite{jiang2025digra}, which are designed for dynamic RFANN scenarios with efficient index updates. However, their experiments target update-heavy settings rather than the static RFANN setting studied here, so we exclude them from our comparisons.

\noindent\textbf{Parameter Settings.} The construction of our RNSG index involves three key parameters (Section~\ref{subsec:rnsg-construction-alg}): $ef_\mathrm{spatial}$ controls the size of the spatial-similar neighborhood, $ef_\mathrm{attribute}$ bounds the size of the attribute-window neighborhood in sorted attribute order, and $m$ specifies the maximum final neighborhood size. \revRone{Unless otherwise stated, RNSG uses $ef_\mathrm{spatial}=128$, $ef_\mathrm{attribute}=1500$, and $m=200$; GIST1M uses $m=300$ because the high-dimensional dataset benefits from the larger final neighborhood. Exp-4 studies the sensitivity of 
parameters and gives practical guidance for setting them. For the baseline methods, we adopt the hyperparameter tuning strategy used in FANNBench~\cite{li2025attribute} to obtain dataset-specific settings. Due to page limitations, we defer the detailed hyperparameter configurations to the technical report~\cite{zou2026rnsgrangeawaregraphindex}}.

\begin{figure*}[!t]
\centering
\includegraphics[width=0.9\textwidth]{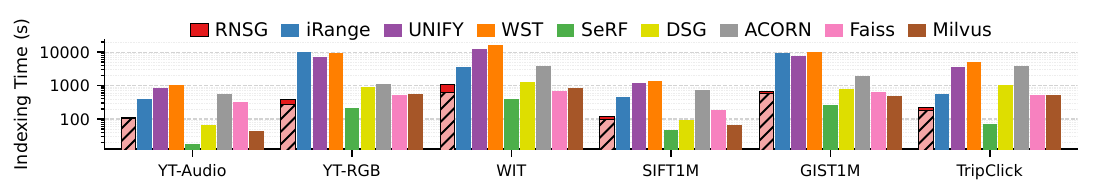}
\vspace*{-1em}
\includegraphics[width=0.9\textwidth]{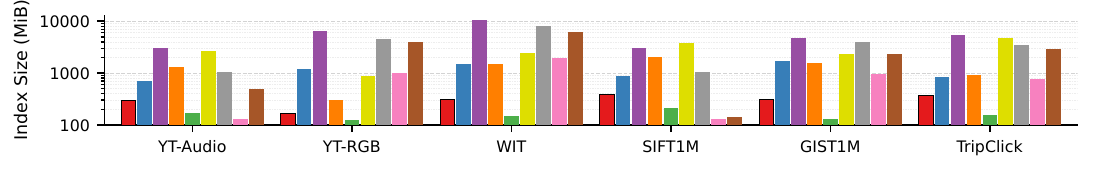}
\vspace*{-0.2em}
\caption{\revRone{Construction cost and deployed index size} on the original benchmark datasets. Top: index construction time, \revRone{where RNSG is shown as KNNG construction plus the RNSG-specific build stage.} Bottom: deployed index size.}
\vspace*{-1em}
\label{fig:index-cost}
\label{fig:index-time}
\label{fig:memory}
\end{figure*}



\noindent\textbf{Experimental Environment.} All methods are implemented in C++ and compiled with GCC 15.1.0 using -O3. We use the public baseline implementations, measure indexing with 128 threads, and report single-thread query performance. Experiments run on a server with an AMD Ryzen Threadripper 3990X Processor (2.2GHz) and 224 GB of memory. 

\vspace{-1em}
\subsection{Experimental Results} \label{subsec:exp-results}
\noindent\textbf{Exp-1: RFANN Querying Performance.} We compare RNSG with the baselines in Fig.~\ref{fig:exp-acc-eff}. The benchmark includes \revRone{TripClick}; \revRone{the fixed-selectivity workloads include 50\% high-selectivity queries.} The main findings are summarized as follows.

\textit{\textbf{(1) Strong Overall Performance}}: RNSG achieves the best or near-best performance under both mixed and fixed query workloads on most datasets. \revRone{The recall-QPS curves include DSG with dataset-specific construction profiles.} Across the evaluated static RFANN workloads, RNSG provides strong matched-recall trade-offs while keeping the deployed index as one graph.

\revRone{
\textit{\textbf{(2) Strength Across Selectivities}}: Our method performs particularly well on low-selectivity queries because RNSG preserves desirable search properties within the induced subgraph for any query range, while other methods face trade-offs between pruning excessive edges and evaluating out-of-range objects.}
\revRone{At 50\% selectivity, RNSG remains competitive on broader high-selectivity RFANN workloads.} As selectivity increases, fewer intermediate vertices are removed by the range predicate, so the workload becomes closer to ordinary ANN search and the range-induced navigability problem becomes less severe for all graph-based methods. \revRone{RNSG still uses part of its degree budget to preserve range-aware connectivity across arbitrary intervals; under broad ranges, these range-supporting edges provide less marginal benefit and can increase the number of in-range neighbors examined during traversal. Thus, the speedup over strong baselines naturally narrows at high selectivity, while RNSG remains competitive and retains the single-index storage advantage.}

\textit{\textbf{(3) Broader baseline behavior}}: The broader benchmark in Fig.~\ref{fig:exp-acc-eff} also clarifies the role of compact and generic filtered-ANN alternatives. SeRF is compact but can lose recall due to aggressive compression, while predicate-agnostic systems such as ACORN-$\gamma$, Faiss-IVFPQ, and Milvus-HNSW are less specialized for ordered numeric ranges. These trends are consistent with previous RFANN studies~\cite{xu2024irangegraph, li2025attribute} and show where RNSG's range-aware pruning provides the most effective recall-QPS trade-off.

\begin{figure}[!t]
  \centering
  \includegraphics[width=\linewidth]{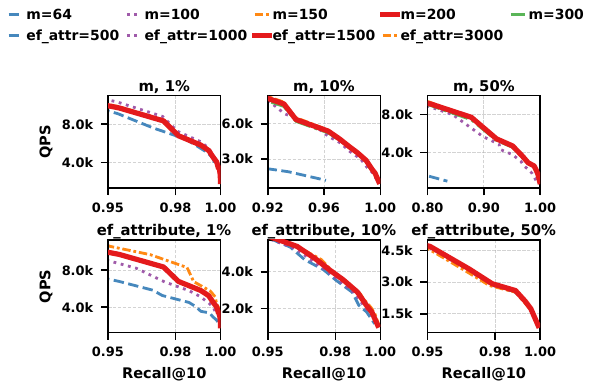}
  \caption{\revRone{Parameter sensitivity on SIFT1M.} The top row varies the final-neighborhood budget $m$, the bottom row varies the attribute-window budget, and columns correspond to 1\%, 10\%, and 50\% selectivity.}
  \vspace*{-1em}
  \label{fig:para-affect}
\end{figure}

\noindent\textbf{Exp-2: Index Construction Time}. The top panel of Fig.~\ref{fig:index-cost} compares construction time on the original benchmark datasets. For RNSG, \revRone{we report the full one-time cost as KNNG construction plus the RNSG-specific build stage.} \revRone{The KNNG is a reusable spatial candidate graph for a fixed dataset, metric, and degree: it can be built once for multiple RNSG parameter settings, shared with other graph-index pipelines, and discarded after the final RNSG graph is produced.} The figure therefore makes build cost explicit. DSG is construction-efficient, consistent with its dynamic-oriented design; RNSG's main advantage is the combined query-performance and single-graph storage trade-off.

\noindent\textbf{Exp-3: Index Size.} The bottom panel of Fig.~\ref{fig:index-cost} reports deployed index size. \revRone{RNSG stores one sparse graph, whereas several RFANN baselines obtain range awareness by materializing or composing segment-level graphs, range-cover indexes, auxiliary edge sets, or range-pruning metadata.} Thus, a smaller base-graph degree does not necessarily imply a smaller deployed index when connectivity is replicated across many partitions or range covers. \revRone{RNSG instead embeds attribute-order information into the pruning rule, so range-supporting edges are retained in the same adjacency lists and shared by all query intervals; the temporary KNNG is not deployed.} Directly parsing the deployed 1M graph files confirms that the degree budget is only an upper bound: their realized average out-degree is 41.6--100.3 under the reported budgets, and \revRone{the compact header entry arrays average 9.8--10.5 ids per header, occupying only 0.65--5.07 MiB, or 0.21\%--1.65\% of the serialized graph file.} SeRF has the smallest footprint due to aggressive compression, but its space result should be read together with its recall-QPS curves. Overall, RNSG targets a favorable memory-performance trade-off rather than minimizing index size alone.

\noindent\textbf{\revRone{Exp-4: Parameter Sensitivity Analysis.}}
This experiment investigates the construction-side parameters that determine RNSG's graph quality and index cost. Fig.~\ref{fig:para-affect} separates the final-neighborhood budget $m$ from the attribute-window budget instead of mixing all selectivities into one aggregate point. This is important because a mixed workload can hide the behavior of small-range queries. \revRone{The same sweep also provides practical evidence for the finite-candidate approximation gap: if increasing the candidate or degree budget no longer changes recall-QPS or realized degree meaningfully, the bounded construction has reached the useful part of the available range-aware candidate source for this workload. We use this only as a saturation check, not as a claim of exact RRNG edge equivalence.}

The parameter $m$ is the maximum number of neighbors retained for each node after range-aware pruning. Increasing $m$ gives the search more preserved directions and can improve high-recall connectivity under range filters, but it also increases graph degree, index size, construction time, and the number of neighbors considered during search. Its effect is therefore expected to saturate: $m$ is only an upper bound on the retained degree, and the pruning rule discards candidates that are already covered by closer range-aware witnesses. \revRone{On SIFT1M, $m=64$ is too small: it misses the 0.98 recall target at both 10\% and 50\% selectivity, with maximum recall 0.9613 and 0.8351 respectively. Increasing $m$ to 100 makes the graph usable, while $m=150$, 200, and 300 produce close recall-QPS curves. We therefore use $m=200$ as the default, since the benefit has largely saturated without paying the extra cost of the larger graph.}

The varying-$k$ experiment below uses the same $m=200$ graph for $k=20$ to $100$. \revRone{This saturation is also visible in the stored graph: increasing the degree budget from 200 to 300 changes the SIFT1M average out-degree only from 100.34 to 100.59 and the serialized graph size from 391.7 to 392.6 MiB.} For the attribute-window sweep, $ef_\mathrm{attribute}=1500$ is the stable default. A larger window, $ef_\mathrm{attribute}=3000$, can improve the 1\% boundary, but it increases graph degree, index size, and build cost while bringing little benefit at 10\% and 50\%. \revRone{For example, on SIFT1M it raises the realized average out-degree from 100.34 to 110.18 and the graph size from 391.7 to 429.2 MiB, so the larger candidate window is not the default unless the workload specifically requires the 1\% boundary gain.}

\noindent\textbf{\revRone{Exp-5: Approximation-Gap Probe.}}
\revRone{This diagnostic experiment quantifies the cost of using one reusable graph rather than rebuilding a graph for each query interval. 
For 10 sampled 1\% intervals on TripClick1M and 50 queries per interval, we compare the deployed Full-RNSG with a diagnostic Range-Rebuild reference that regenerates the KNNG and attribute-window candidates inside each interval using the same construction budget ($ef_\mathrm{spatial}=128$, $ef_\mathrm{attribute}=1500$, and $m=200$). 
Both methods use the same query vectors, intervals, deterministic in-range entry policy, and exhaustive in-range ground truth; rebuild construction time is excluded because this is a diagnostic reference, not a deployable baseline. 
Structurally, the interval-induced subgraphs of Full-RNSG cover 81.5670\% of the directed edges in the corresponding Range-Rebuild graphs on average. 
This coverage is stable across the 10 intervals, ranging from 81.3792\% to 81.7648\%, with no apparent outlier interval. 
\revRone{After a dense local beam/truncation sweep around the recall@10 target 0.98, the closest observed matched-recall pair gives recall 0.9834 for both methods: Full-RNSG uses 1{,}004 distance computations on average, while Range-Rebuild uses 651.} 
Thus, the single reusable graph preserves most interval-specific directed edges but still pays about $1.54\times$ the distance evaluations of a per-interval rebuild at the same recall level, which quantifies the finite-candidate approximation gap while explaining the practical trade-off: RNSG avoids materializing a separate graph for every possible range.}

\begin{figure}[!t]
  \vspace*{-3em}
  \centering
  \includegraphics[width=\linewidth]{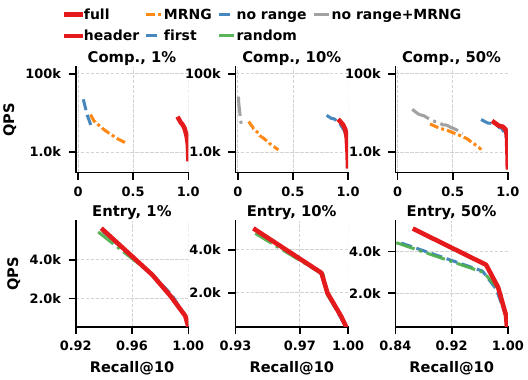}
\caption{\revRone{Design ablation on TripClick.} The top row removes construction components, the bottom row compares entry policies on the full construction, and columns correspond to 1\%, 10\%, and 50\% selectivity.}
  \label{fig:design-ablation}
\end{figure}

\noindent\textbf{\revRone{Exp-6: Design Ablation.}} Fig.~\ref{fig:design-ablation} evaluates RNSG's main design components: the construction ablations test \revRone{the attribute-window augmentation} in Algorithm~\ref{alg:rnsg} and \revRone{RAP's range-aware pruning}, while the entry ablations isolate the query-time startup policy. 
On TripClick, removing attribute-window augmentation severely hurts low-selectivity search, \revRone{reaching only 0.1148 maximum recall on the 1\% workload} and also missing the 0.98 recall target at 10\%. 
\revRone{Replacing RAP with conventional MRNG pruning} fails to reach 0.98 recall at all tested selectivities, with \revRone{maximum recalls of 0.4350, 0.3678, and 0.7580 at 1\%, 10\%, and 50\%, respectively}; disabling both components degrades the graph further. 
For entry policies, header, first, and random entries give generally close curves, but header can still provide up to a 1.1\(\times\) QPS gain with almost no overhead. 
We therefore present header entry as a stable, lightweight startup heuristic for cheap additional improvement, rather than as the primary source of RNSG's speedup.

\begin{figure}[!t]
  \vspace*{-3em}
  \centering
  \includegraphics[width=\linewidth]{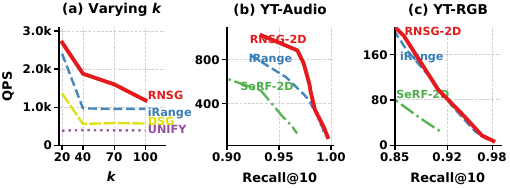}
  \caption{\revRone{Additional robustness and extension experiments:} (a) varying-$k$ on TripClick; (b,c) two-attribute range-filtered search on YouTube datasets.}
  \label{fig:followup-compact}
\end{figure}


\noindent\textbf{\revRone{Exp-7: Robustness to} $k$ \revRone{and Multiple Numeric Attributes.}}
\revRone{ Fig.~\ref{fig:followup-compact}(a) evaluates varying-\(k\) RFANN queries on TripClick at 10\% selectivity and recall at least 0.95, comparing methods at matched recall along their recall-QPS curves. 
RNSG remains faster than iRangeGraph by \(1.12\times\), \(1.94\times\), \(1.68\times\), and \(1.24\times\) for \(k=20\), 40, 70, and 100\revRone{, respectively,} indicating that the query-performance conclusion is not limited to recall@10. 
Fig.~\ref{fig:followup-compact}(b,c) further evaluates RNSG-2D on YouTube two-attribute workloads with conjunctive ranges over \textit{likes} and \textit{views}, using the FANNBench multi-query interface and the same query ranges and exhaustive ground truth as the iRangeGraph protocol. 
iRangeGraph is evaluated through its native FANNBench implementation, while SeRF-2D follows the SeRF-style two-attribute comparison used in the iRangeGraph evaluation.
RNSG-2D outperforms the baselines on YT-Audio and achieves comparable high-recall trade-offs to iRangeGraph on YT-RGB, whereas SeRF-2D does not reach the same high-recall regime. 
This experiment provides a concrete \revRone{two-attribute validation} of the \revRone{coordinate-wise pruning principle}, rather than a complete many-attribute system study.}

\begin{figure}[!t]
  \centering
  \begin{subfigure}[t]{0.30\linewidth}
    \centering
    \includegraphics[page=1,width=\linewidth]{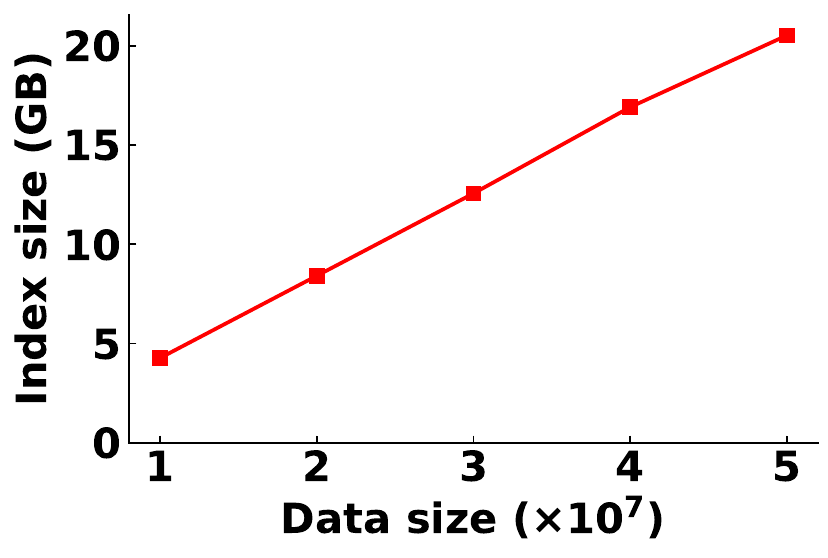}
    \caption{\revRone{Index size}}
  \end{subfigure}%
  \hfill
  \begin{subfigure}[t]{0.30\linewidth}
    \centering
    \includegraphics[page=1,width=\linewidth]{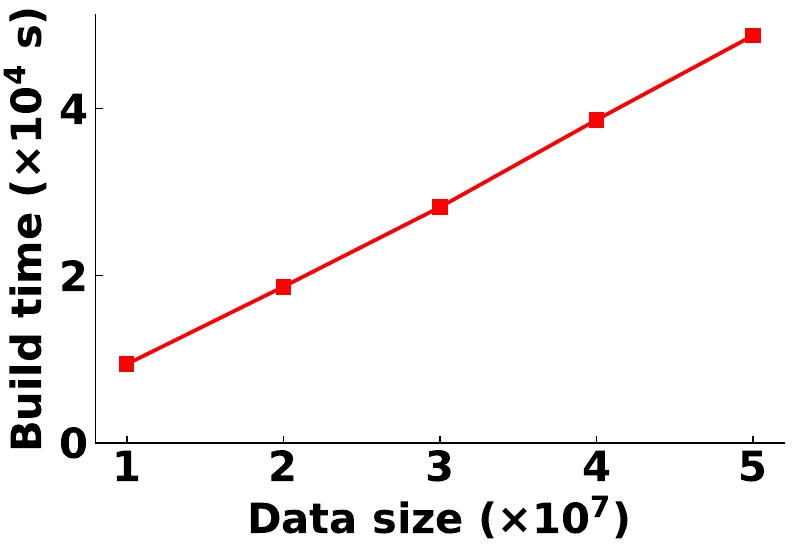}
    \caption{\revRone{Indexing time}}
  \end{subfigure}%
  \hfill
  \begin{subfigure}[t]{0.32\linewidth}
    \centering
    \includegraphics[page=1,width=\linewidth]{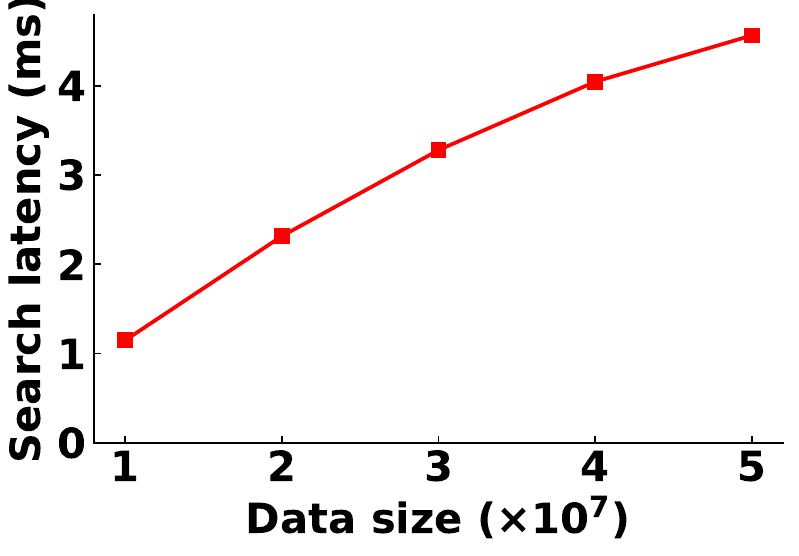}
    \caption{\revRone{Latency}}
  \end{subfigure}
  \vspace*{-0.5em}
  \caption{\revRone{Cardinality scalability} of RNSG on SIFT10M--SIFT50M. Query latency is reported at 0.9 recall.}
  \vspace*{-1em}
  \label{fig:scalability}
\end{figure}

\noindent\textbf{\revRone{Exp-8: Scalability on Larger Datasets.}} We further isolate \revRone{cardinality scalability} on SIFT-scale workloads from 10M to 50M objects using the mixed query workload. Fig.~\ref{fig:scalability} shows approximately linear growth in index size and construction time, with smooth latency growth. This complements SpaceV10M by isolating RNSG's own scaling as cardinality grows, which can be found in appendix.


\section{Related Work} \label{sec:related-work}
\noindent\textbf{Approximate Nearest Neighbor Search.} Existing ANN algorithms fall into four main categories: graph-based~\cite{25SIGMODgraphindexsurvey,wang2021comprehensive,malkov2018efficient,malkov2014approximate,harwood2016fanng,fu2017fast,azizi2023elpis,jayaram2019diskann,chen2021spann}, quantization-based~\cite{DBLP:conf/mm/TuncelFR02,gao2025practical,ge2013optimized,jegou2010product,gao2024rabitq,gong2012iterative,matsui2018survey,aguerrebere2023similarity,andre2016cache,li2025saq}, hashing-based~\cite{wang2014hashing,wang2017survey,pham2022falconn,DBLP:journals/tods/TaoYSK10,LSHDatarIIM04,DBLP:journals/pami/HeoLHCY15,DBLP:journals/pvldb/HuangFZFN15,DBLP:conf/sigmod/GanFFN12,DBLP:conf/sigmod/LeiHKT20,DBLP:conf/sigmod/LiYZXCLNC18}, and tree-based methods~\cite{ram2019revisiting,beygelzimer2006cover}. For a comprehensive overview, we refer readers to recent surveys~\cite{pan2024vector,DBLP:journals/vldb/PanWL24}. Extensive empirical evaluations~\cite{liApproximateNearestNeighbor2020,DBLP:journals/is/AumullerBF20,wang2017survey,25SIGMODgraphindexsurvey} have consistently shown that graph-based methods achieve state-of-the-art performance. This superiority is attributed to the graph's ability to encode proximity relationships, enabling query processing to converge rapidly by examining only a small fraction of the dataset~\cite{dobson2023scaling}. Most graph-based indexes~\cite{wang2021comprehensive} are built upon four fundamental graph types: the Delaunay Graph (DG)~\cite{dgraph}, \yin{Monotonic} Relative Neighborhood Graph (\yin{MRNG})~\cite{fu2017fast}, K-Nearest Neighbor Graph (KNNG)~\cite{knngraph}, and Minimum Spanning Tree (MST)~\cite{mst}. Among these, \yin{MRNG}-based indexes deliver particularly high performance due to their effective pruning strategy~\cite{wang2021comprehensive}. Building upon the \yin{MRNG}, this work introduces a novel range-aware 
\yin{ANN} index, specifically designed to enable efficient RFANN search.

\noindent\textbf{Attribute-filtered ANN Search.} Most existing RFANN studies~\cite{zhang2025efficient,wang2021milvus,douze2024faiss} follow a two-stage paradigm that separates vector and attribute retrieval before merging results. For example, ADBV~\cite{wei2020analyticdb} uses a B-Tree and a PQ index separately, choosing a strategy via a cost model. A fundamental weakness of these methods is that they do not co-optimize the index structure, resulting in performance that is often inferior to modern hybrid indexes \cite{chronis2025filtered,xu2024irangegraph,zuo2024serf,liang2024unify,engels2024approximate,li2025attribute}. Recent hybrid indexes like iRangeGraph \cite{xu2024irangegraph} integrate graph and attribute indexes more directly by building a graph for each segment of a partitioned attribute range. While this yields high performance, it does so at the cost of significant indexing overhead. Beyond RFANN-specific methods, general-purpose systems like Faiss~\cite{douze2024faiss}, Milvus~\cite{wang2021milvus}, VBASE~\cite{zhang2023vbase}, and ACORN~\cite{patel2024acorn} handle filtered queries on unordered attributes. Since they do not leverage the ordering of numeric attributes in their index structures or algorithms, they tend to be outperformed by RFANN-optimized methods in prior studies and in our evaluation~\cite{xu2024irangegraph,li2025attribute}.

\vspace*{-0.5em}
\section{Conclusion and Future Work} \label{sec:conclusion}
In this paper, we propose a novel graph index for efficient RFANN 
search. We first establish a foundational graph indexing theory, the Range-Aware Relative Neighborhood Graph (RRNG), which integrates both spatial and attribute proximity of the data objects. The RRNG guarantees that beam search can find the exact nearest neighbor for any node, under any query range. As exact RRNG construction is computationally expensive, we propose RNSG, a practical solution that approximates the RRNG efficiently. Comprehensive experiments 
demonstrate that RNSG outperforms state-of-the-art methods, achieving superior index compactness, faster construction times, and significantly higher query performance. Future work includes extending our graph index to support dynamic updates and non-numerical attribute filters.

\newpage
\balance
\bibliographystyle{ACM-Reference-Format}
\bibliography{ref}

\ifrnsgfullversion
\newpage
\appendix

\comment{
\section{Proof of Lemma~\ref{lem:efficient_search}}
\begin{proof}
According to Lemma~\ref{thm:rrng-monotone}, we can conclude that for any starting point $x$ and endpoint $y$, $x$ can always find a neighbor $z$ such that $z$ is closer to $y$ than $x$ is to $y$, and there must also exist a corresponding node from $z$ to $y$. 

Consider the beam-search process: the candidate queue always contains the closest set of points to $y$ among all previously traversed points. Thus, during the search, the leading point in the queue either directly finds $y$ itself or, within one expansion step, discovers a point closer to $y$ and places it at the queue head. Consequently, as beam-search continues, it inevitably finds the node $y$.
\end{proof}
}

\section{Appendix} \label{sec:appendxi}
\subsection{Baseline Configuration Details}\label{app:baseline-config}
Table~\ref{tab:baseline-config} summarizes the construction-parameter policy used for the baseline comparison in Section~\ref{sec:exp}. All baselines are invoked through the FANNBench wrappers or their original interfaces as integrated in FANNBench. Query-time search parameters are swept to produce recall-QPS curves; the table below reports construction-side settings because those determine build time and index size.

\begin{table*}[t]
\small
\centering
\caption{Construction-parameter policy for the evaluated baselines. Dataset-specific exceptions follow the FANNBench profiles and the cleaned construction metadata used for the paper figures.}
\label{tab:baseline-config}
\begin{tabularx}{0.98\textwidth}{lX}
\toprule
Method & Construction policy \\
\midrule
RNSG & $ef_\mathrm{spatial}=128$, KNNG degree $128$, $ef_\mathrm{attribute}=1500$, and incoming reverse refinement. We use $m=200$ by default and $m=300$ for GIST1M according to the parameter study. Reported build time is KNNG time plus the RNSG-specific pruning/build stage; the deployed index stores only the final RNSG graph and query-time auxiliary data. \\
iRangeGraph & FANNBench iRangeGraph profile. YT-RGB and GIST1M use $M=64$ and $ef_\mathrm{construction}=400$; the other datasets use $M=16$ and $ef_\mathrm{construction}=100$. \\
DSG & Official DSG/FANNBench wrapper with compact construction. We run a dataset-specific construction grid and rerun the final query curves from the selected profiles: YT-Audio/SIFT1M use $M=16$, GIST1M uses $M=32$, YT-RGB/WIT/TripClick use $M=64$, all use $ef_\mathrm{construction}=160$, and $ef_\mathrm{max}=300$ except TripClick, which uses $ef_\mathrm{max}=600$. \\
UNIFY & FANNBench UNIFY profile with $ef_\mathrm{construction}=500$ and $B=8$. GIST1M uses $M=16$; the other datasets use $M=40$. \\
SuperPostfiltering (WST) & FANNBench WST profile with split factor $2$, shift factor $0.5$, and final beam multiplier $32$. \\
SeRF & FANNBench/MaxLeap profile. Most datasets use $M=8$, $ef_\mathrm{max}=1000$, and $ef_\mathrm{construction}=500$; SIFT1M uses $ef_\mathrm{construction}=300$; GIST1M uses $M=32$, $ef_\mathrm{max}=100$, and $ef_\mathrm{construction}=100$. \\
ACORN-$\gamma$ & FANNBench ACORN profile. YT-Audio, YT-RGB, SIFT1M, and TripClick use $M=40$, $M_\beta=64$, and $\gamma=25$; WIT and GIST1M use $M=32$, $M_\beta=32$, and $\gamma=10$. \\
Faiss-IVFPQ & FANNBench Faiss-IVFPQ profile with dataset-dependent partition size: $128$ for YT-Audio/SIFT1M, $768$ for TripClick, $960$ for GIST1M, $1024$ for YT-RGB, and $2048$ for WIT. \\
Milvus-HNSW & FANNBench Milvus-HNSW profile. GIST1M uses $M=32$ and $efConstruction=400$; the other datasets use $M=40$ with dataset-dependent $efConstruction$ values from the FANNBench profile. \\
\bottomrule
\end{tabularx}
\end{table*}

\subsection{Proof of Corollary~\ref{lem:efficient_search}}\label{app:proof-efficient-search}
\begin{proof}
Theorem~\ref{thm:rrng-monotone} guarantees the existence of a directed monotonic path from any entry node $x$ to any target object $y$: at each step, there is an outgoing neighbor closer to $y$. In best-first graph search, the candidate min-heap $\mathcal{S}$ maintains visited nodes prioritized by their distance to $y.v$. In each iteration, the closest visited node is expanded. If it is not yet $y$, monotonicity guarantees that at least one outgoing neighbor is closer to $y$ and will be inserted into $\mathcal{S}$. With a search list large enough not to discard the current closest candidate, the minimum distance in $\mathcal{S}$ strictly decreases until $y$ is reached.
\end{proof}

\subsection{Proof of Theorem~\ref{thm:rrng-built-time}}\label{app:proof-rrng-build-time}
\begin{proof}
The direct exact construction considers every source node $x$, every candidate $y$, and every earlier same-side candidate $z$ that could be an attribute-between witness for the directed edge $x\to y$. This enumerates $O(n^3)$ source-candidate-witness triples. The geometric comparisons cost $O(d)$ each, but in the intended high-dimensional ANN setting this factor is dominated by the cubic witness enumeration in the graph-construction analysis, giving the stated $O(n^3)$ bound.
\end{proof}

\subsection{Proof of Lemma~\ref{lemma:prune-divide}}\label{app:proof-prune-divide}
\begin{proof}
For any $v_l\in C_l$ and $v_r\in C_r$, we have $v_l.a<x.a<v_r.a$. Hence neither side can satisfy the attribute-between condition required to witness an edge from $x$ to a candidate on the opposite side: $v_l.a<v_r.a<x.a$ is false, and $x.a<v_l.a<v_r.a$ is false. Therefore candidates in $C_r$ cannot affect pruning decisions for candidates in $C_l$, and candidates in $C_l$ cannot affect pruning decisions for candidates in $C_r$.
\end{proof}

\subsection{Proof of Lemma~\ref{lemma:prune-order}}\label{app:proof-prune-order}
\begin{proof}
Consider the left side $C_l$; the right side is symmetric. If $c_i,c_j\in C_l$ and $|c_i.a-x.a|<|c_j.a-x.a|$, then $c_i.a>c_j.a$. For $c_j$ to witness and prune edge $(x,c_i)$, it would need to lie between $c_i$ and $x$ in attribute order, i.e., $c_i.a<c_j.a<x.a$, which contradicts $c_j.a<c_i.a$. Thus a larger-gap left-side candidate cannot prune a smaller-gap left-side candidate. On the right side, the inequality reverses analogously and pruning would require $x.a<c_j.a<c_i.a$, which is impossible.
\end{proof}

\subsection{Proof of Theorem~\ref{thm:prunning-complexity}}\label{app:proof-pruning-complexity}
\begin{proof}
Let $\deg(v)$ be the out-degree of node $v$ in the RRNG. Constructing the exact RRNG with Algorithm~\ref{alg:RAP} uses $C=V$ and $m=\infty$ for every node. For a single node $v$, the algorithm scans $n$ candidates and compares each candidate against the retained set, whose final size is $\deg(v)$. The per-node cost is therefore bounded by $O(n\deg(v))$. Summing over all nodes gives
\[
  \sum_{v\in V} O(n\deg(v)) = O\!\left(n\sum_{v\in V}\deg(v)\right)=O(nM_{rrng}),
\]
where $M_{rrng}$ is the number of RRNG edges.
\end{proof}

\subsection{Proof of Theorem~\ref{thm:rnsg}}\label{app:proof-rnsg-build-time}
\begin{proof}
For each node $v$, Algorithm~\ref{alg:RAP} scans at most $|C|$ candidates and compares each candidate against the retained RNSG neighbors of $v$. Its cost is therefore $O(|C|\cdot \deg_{RNSG}(v))$. Summing this bound over all nodes gives
\[
  O\!\left(|C|\sum_v \deg_{RNSG}(v)\right)=O(|C|M_{rnsg}),
\]
where $M_{rnsg}$ is the number of edges in the final RNSG.
\end{proof}

\subsection{Proof of Theorem~\ref{thm:search-complex}}

\begin{lemma}[expected length of monotonic path]\label{thm:irng-exp-length}
For any start node $x\in V$ and any target object $y\in V$, the greedy monotone walk on $G$ from $x$ to $y$ has expected length
\[
\mathbb{E}\big[\mathrm{steps\ on}\ G\big]
= O\!\left(\frac{n^{1/d}\,\log n}{\Delta r}\right),
\qquad n:=|V|.
\]
where $\Delta r$ denotes the minimum difference in side lengths among any non-isosceles triangles in the space.
Moreover, if $x$ and $y$ are label-adjacent in an interval $I$ (no label strictly between $a(x)$ and $a(y)$ inside \(I\)), then the directed edge from $x$ to $y$ is retained in \(G[I]\) and the walk terminates in one step.
\end{lemma}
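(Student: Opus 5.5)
We split the argument into the one-step claim about label-adjacent pairs, which is purely combinatorial, and the expected-length bound, which mimics the MRNG analysis of \cite{fu2017fast} (Lemma~\ref{lem:MRNG-complex}).

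\emph{Label-adjacent pairs.} We handle this part first. Suppose $x,y\in V_I$ and no node of $V_I$ has attribute strictly between $a(x)$ and $a(y)$. Consider any $z\in\operatorname{rlune}(x,y)$. Its attribute lies strictly between $a(x)$ and $a(y)$, so it lies inside the convex set $I$, which gives $z\in V_I$. This contradicts label-adjacency, so $\operatorname{rlune}(x,y)\cap V_I=\emptyset$. By Theorem~\ref{thm:rrng-closure}, $G[I]$ is the RRNG on $V_I$. In that RRNG the edge $x\to y$ has an empty range-aware lune, so it is retained by Definition~\ref{def:rrng}. The greedy walk, standing at $x$, sees $y$ at distance $0$, which is strictly minimal, and stops after one step.

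\emph{Counting steps of the greedy walk.} Let the walk be $x=v_0\to v_1\to\cdots\to v_m=y$, where $v_{i+1}$ is the retained out-neighbor of $v_i$ closest to $y$. Write $r_i=\delta(v_i,y)$. Theorem~\ref{thm:rrng-monotone} guarantees that some retained out-neighbor is strictly closer to $y$, so the walk is well defined and $r_{i+1}<r_i$. We then show a quantitative decrease, in two steps.

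\begin{enumerate}[leftmargin=*]
\item The edge $v_i\to y$ is either present, in which case the walk ends, or it is pruned. If it is pruned, there is a retained witness $z$ satisfying $\delta(v_i,z)<r_i$ and $\delta(z,y)<r_i$.
\item The triangle $(v_i,z,y)$ cannot be isosceles with $\delta(z,y)=r_i$, because that inequality is strict. By the definition of $\Delta r$, the smallest gap between side lengths of a non-isosceles triangle, we get $\delta(z,y)\le r_i-\Delta r$. Hence $r_{i+1}\le r_i-\Delta r$.
\end{enumerate}

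This yields a deterministic bound $m\le r_0/\Delta r$. It remains to bound $r_0$ in expectation. Under the MRNG-style uniform-density assumption used in \cite{fu2017fast}, the relevant radius scales like $n^{1/d}$ with a logarithmic factor, which gives $O(n^{1/d}\log n/\Delta r)$.

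\emph{Main obstacle: the attribute restriction.} A range-aware witness must also lie between $x$ and $y$ in attribute order. In ball-shrinking arguments this means that only about $1/(\text{rank gap})$ of the geometric lune is usable. We expect this restriction to cost one $\log n$ factor in the degree bound of Theorem~\ref{thm:space-complex}, but no extra factor in the path length, since the decrease argument above uses only a retained witness and never uses its density. The plan is therefore to take the expected $O(n^{1/d}\log n)$ radius or volume estimate from \cite{fu2017fast} as a black box, justified by the uniform-density assumption, and combine it with the per-step decrease $\Delta r$. That bound on the expected scale of $r_0$ is the step we expect to be least rigorous.
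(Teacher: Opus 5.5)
Your label-adjacency argument is correct and is the paper's argument. The appeal to Theorem~\ref{thm:rrng-closure} is superfluous: what you actually prove is that any $z\in\operatorname{rlune}(x,y)$ would lie in $V_I$, so $\operatorname{rlune}(x,y)=\emptyset$ already in $D$. Hence $x\to y\in E$, and therefore it is an edge of $G[I]$.

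For the length bound, the paper unpacks nothing. It notes that RRNG is a monotonic search network by Theorem~\ref{thm:rrng-monotone}, observes that the attribute-between condition only restricts which witnesses are admissible, and imports the MSNET bound of \cite{fu2017fast} wholesale. Your decomposition (a per-step decrease of $\Delta r$ via a retained witness, then a bound on $r_0$) is a more explicit version of the same idea. Your remark that the attribute restriction costs a $\log n$ only in the degree, not in the path length, agrees with the paper.

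Two steps need repair, though.

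First, your justification of $\delta(z,y)\le r_i-\Delta r$ is wrong as written. The strict inequalities $\delta(v_i,z)<r_i$ and $\delta(z,y)<r_i$ only exclude the isosceles cases with apex $v_i$ or apex $y$. The triangle $(v_i,z,y)$ can still be isosceles with apex $z$, i.e.\ $\delta(v_i,z)=\delta(z,y)<r_i$. Since $\Delta r$ is defined only over non-isosceles triangles, in that case it gives no lower bound on $r_i-\delta(z,y)$. Switching to the triangle $(v_i,v_{i+1},y)$ does not help, because it can be isosceles too. You need a general-position assumption: no isosceles triangles, i.e.\ for every node the distances to the other nodes are pairwise distinct. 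This is exactly what the strict ordering $\delta(u_1,y)<\delta(u_2,y)<\cdots$ in the proof of Theorem~\ref{thm:rrng-monotone} silently assumes for every target $y$. Under it your step is valid.

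Second, the black box you plan to import is mis-specified. \cite{fu2017fast} bounds the walk length, not the start--target distance, and under a uniform model the typical value of $r_0$ carries no $\log n$ factor. You need no expectation here at all:
\begin{itemize}
\item your bound $m\le 1+r_0/\Delta r$ is deterministic (the final step into $y$ need not decrease by $\Delta r$);
\item $r_0\le\operatorname{diam}(V)$;
\item under the normalization that makes the factor $n^{1/d}$ meaningful ($n$ points in a region of volume $\Theta(n)$ with bounded aspect ratio), $\operatorname{diam}(V)=O(n^{1/d})$.
\end{itemize}
This gives the claimed bound with a $\log n$ to spare. So the step you flagged as least rigorous is actually the easy one. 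What the lemma genuinely rests on, in your proof and in the paper's, is the general-position assumption behind $\Delta r$ and the scale normalization, and neither appears in the statement.
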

\begin{proof}
RRNG satisfies the monotonic searchability property proved in Theorem~\ref{thm:rrng-monotone}. The standard MSNET analysis for monotonic graph search~\cite{fu2017fast} bounds the expected length of a monotone walk by the number of distance-reducing cone crossings around the target, which is \(O(n^{1/d}\log n/\Delta r)\) under the same separation assumption. RRNG differs from MRNG only by restricting valid witnesses with the attribute-between condition; the monotonic path guaranteed by Theorem~\ref{thm:rrng-monotone} still follows distance-reducing directed edges. Therefore the same path-length bound applies. If $x$ and $y$ are label-adjacent inside $I$, no attribute-between witness exists, so the directed edge from $x$ to $y$ is retained and the walk terminates in one step.
\end{proof}

\begin{lemma}\label{lem:rrng-exp-deg}
Let the average degree of \yin{MRNG} be $D_r$~\cite{fu2017fast}. Under the random attribute-order model used in the RRNG size analysis, the expected average degree of RRNG is bounded by
\[
2D_r\left[1+\log\left(\frac{N-1}{2D_r}\right)\right],
\]
where $D_r$ is bounded by the maximum degree of \yin{MRNG}, a constant in fixed-dimensional Euclidean space.
\end{lemma}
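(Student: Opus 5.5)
Fix a source node \(x\) and analyse its right side; the left side is symmetric, which gives the factor \(2\). Rank the other \(N-1\) nodes on the right of \(x\) by attribute gap. Under the random attribute-order model, the attribute ranks are independent of the vector positions, so this sequence is a uniformly random permutation of the candidates, independent of geometry. By Lemma~\ref{lemma:prune-order} and the ordered-scan description after Definition~\ref{def:rrng}, a candidate \(y\) at position \(j\) is kept exactly when no previously kept candidate lies in the geometric lune of \((x,y)\). So the right-side degree of \(x\) is the number of kept elements when MRNG-style pruning is applied to a random-order stream.

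The plan is to compare this with MRNG restricted to the prefix. Let \(P_j\) be the first \(j\) candidates in the order. The kept set after step \(j\) is the RRNG neighbour set of \(x\) in \(P_j\cup\{x\}\), and this set is a fixed deterministic function of \(P_j\). The \(j\)-th candidate \(y\) is retained exactly when it belongs to that set.

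First, I bound this set's size by \(D_r\). A naive argument works directly: the RRNG neighbours of \(x\) within \(P_j\) form a set whose pairwise geometric relations prevent mutual pruning in the processing order. I would argue this is at most the MRNG-style maximum out-degree, using the standard cone or angle argument in fixed dimension that the pruned neighbour sets have pairwise angles of at least \(60^\circ\). Here I use the hypothesis that \(D_r\) is bounded by the maximum MRNG degree, which is a constant.

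Second, I apply backward analysis. Conditioned on the unordered set \(P_j\), the order is uniform, so the last element is uniform among the \(j\) elements. Hence \(\Pr[\text{the } j\text{-th is kept}]\le |NS(P_j)|/j\le D_r/j\). The exception is when \(j\le D_r\), where the probability is trivially at most \(1\).

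Third, I sum over positions. This gives an expected one-sided degree of at most \(D_r+\sum_{j=D_r+1}^{N'} D_r/j\le D_r\bigl(1+\log(N'/D_r)\bigr)\), where \(N'\) is the number of right-side candidates. The two sides have sizes \(N_l+N_r=N-1\). Since \(t\mapsto t(1+\log(s/t))\) is concave, the bound combined over both sides is maximised when the sizes are balanced. That gives \(2D_r[1+\log((N-1)/(2D_r))]\) up to the constant conventions: I would split \(D_r\) per side, or equivalently assign the per-side budget \(D_r\) against half of \(N-1\). Averaging over \(x\) gives the stated average degree.

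The main obstacle is the first step. The degree of the RRNG neighbour set within \(P_j\) needs to be bounded by the MRNG degree constant. This holds because both sets are "no kept point in the lune" sets, so a purely geometric argument applies. That argument is not literally the MRNG on \(P_j\), so I would phrase the bound via the geometric lune-free property rather than via MRNG identity. I also need to make sure the \(\log\) constants match the stated form exactly.
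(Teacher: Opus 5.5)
Your Step~1 is not merely the hard part: it is false, and your scheme does not survive without it. The $60^\circ$ separation argument works for MRNG because that scan is by increasing distance. Of two retained neighbours, the earlier one is therefore the closer one, and an angle below $60^\circ$ would put it in the later one's lune. In RRNG the scan is by attribute gap, and $\operatorname{rlune}(x,y)$ contains only points strictly closer to $x$ than $y$. So a later, closer candidate in the same direction is never pruned by an earlier, farther one.

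Concretely, place all right-side candidates on one ray from $x$ and let the attribute-gap order list them by decreasing distance. Every candidate is kept, so $|NS(P_j)|=j$. Under a uniformly random order, the retained points on the ray are exactly the record minima of the distance sequence, with expected number $H_j$, which is unbounded. Note also that Step~1 at $j=N'$ would by itself bound the one-sided degree deterministically by $D_r$. That would make Steps~2--3 superfluous and contradict the logarithmic factor you are trying to prove.

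Step~2 has a separate defect. Backward analysis needs the structure after step $j$ to be a function of $P_j$ as an \emph{unordered} set, so that the last element is a uniform member of a fixed set. Here $NS(P_j)$ is produced by the scan order inside $P_j$, so you cannot freeze it and then randomize which element came last.

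The missing idea is to localize by direction before counting. The paper covers the directions around $x$ by $N_d$ cones of half-angle $30^\circ$; $N_d$ is the constant that plays the role of $D_r$. It splits each cone by side, so each of the $2N_d$ cone--side cells receives $M\sim\mathrm{Bin}(N-1,1/(2N_d))$ candidates. Inside one cell, an earlier-scanned candidate that is closer to $x$ automatically satisfies two conditions: the geometric lune condition (the angle at $x$ is at most $60^\circ$) and attribute betweenness (same side). Retained edges in a cell are therefore counted by record minima of the distance sequence under the random scan order, giving $\mathbb{E}[H_M]$ per cell. Then $H_m\le 1+\log(1+m)$, Jensen's inequality, and summation over the $2N_d$ cells give the bound.

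In your language, backward analysis does work cell by cell, with the order-independent structure being the single closest point seen so far (size one). A new record then appears at step $j$ with probability $1/j$. The logarithm comes from per-cone harmonic numbers, not from a constant-size global structure. If you follow this route, also make explicit why a \emph{retained} witness exists in the cell when the new candidate is not a record, since RRNG prunes only by retained neighbours. The paper passes over this point quickly.
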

\begin{proof}

Fix $x\in V_I$ and cover the unit sphere around $x$ by $N_d$ cones $\{C_j\}_{j=1}^{N_d}$ of half-angle $30^\circ$.
Split candidates into the \emph{left} side $\{a(\cdot)<a(x)\}$ and the \emph{right} side $\{a(\cdot)>a(x)\}$.
For a fixed cone $C$ and a fixed side $s\in\{\text{left},\text{right}\}$, let $M$ be the number of other points of $V_I$ that fall in $(C,s)$.
By (i)–(ii) and symmetry, $M\sim \mathrm{Bin}(n_I-1,p)$ with $p=\frac{1}{2N_d}$.

Under RRNG pruning, within $(C,s)$ we scan candidates in \emph{label order}. 
Because $s$ fixes the label direction (left/right), any earlier-scanned point is label-between a later one; and in a cone of half-angle $30^\circ$, if $z$ is no farther from $x$ than $y$, then $xy$ is the longest side in the triangle $(x,y,z)$, hence $z$ witnesses the removal of $xy$.
Therefore, a directed edge $x\to y$ is \emph{kept} iff $y$ is a \emph{record minimum} of the distance sequence w.r.t.\ the scanning order.
Let $R_M$ be the number of record minima among $M$ i.i.d.\ items under a random permutation independent of the distances (by (ii)); then
\[
\mathbb{E}[R_M\,|\,M=m]\;=\;H_m\;\triangleq\;\sum_{k=1}^{m}\frac{1}{k}\quad(m\ge 0,\ H_0:=0),
\]
a classical fact proved by $\Pr\{\text{the $k$-th is a new minimum}\}=1/k$ and linearity of expectation.

By linearity across cones and sides,
\[
\mathbb{E}\big[\deg^+_{G}(x)\big]
\;=\;\sum_{s\in\{\text{L},\text{R}\}}\sum_{j=1}^{N_d}\mathbb{E}[R_M]
\;=\;2N_d\,\mathbb{E}[H_M].
\]
For an upper bound that is closed-form and distribution-free in $M$, use $H_m\le 1+\log(1+m)$ for all $m\ge 0$ and the concavity of $\log$:
\[
\mathbb{E}[H_M]\;\le\;\mathbb{E}[1+\log(1+M)]
\;\le\;1+\log\big(1+\mathbb{E}[M]\big)
\;=\;1+\log\Big(1+\tfrac{n_I-1}{2N_d}\Big).
\]
Multiplying by $2N_d$ yields the claimed bound. 

\end{proof}

\begin{proof}
By Lemma~\ref{thm:irng-exp-length}, a monotone walk has expected length \(O(n^{1/d}\log n/\Delta r)\). Lemma~\ref{lem:rrng-exp-deg} bounds the expected number of retained RRNG outgoing neighbors per node by \(O(D_r\log n)\). In fixed-dimensional Euclidean space, \(D_r\) is treated as a constant, and the standard MRNG/MSNET analysis treats the separation factor \(\Delta r\) as a model constant for the expected-case bound~\cite{fu2017fast}. Therefore node-to-node monotone search on RRNG expands \(O(n^{1/d}\log^2 n)\) candidates in expectation. Since \(n^{1/d}\) is close to a constant for high-dimensional vector data, this is often written as \(O(\log^2 n)\).
\end{proof}


\subsection{Proof of Theorem~\ref{thm:space-complex}}
\begin{proof}
The space complexity of RRNG consists of two parts: the number of nodes and the number of edges. The number of nodes is $O(n)$. According to Lemma~\ref{lem:rrng-exp-deg}, we can derive that the number of edges in RRNG is $O(2nD_r[1+\log(\frac{N-1}{2D_r})])$. Furthermore, since $D_r$ is the average number of edges for MRNG, the expected number of edges for MRNG is $S_r = nD_r$. Thus, the number of edges can be expressed as $$O(2S_r[1+\log(\frac{N-1}{2D_r})]).$$ Simplifying this yields $O(S_r\log n)$. Since the number of edges in this expression far exceeds the number of vertices, we can treat the edge count and space complexity as equivalent. Therefore, the space complexity is also $O(S_r\log n)$.
\end{proof}

\comment{
\subsection{Proof of Lemma~\ref{lemma:prune-divide}}
\begin{proof}
We observe that for any $v_l \in C_l$, we have $v_l.a < x.a$, and for any $v_r \in C_r$, we have $v_r.a > x.a$. Taking any pair of nodes $v_l$ and $v_r$ from $C_l$ and $C_r$ respectively, we observe that $v_l.a < x.a < v_r.a$ must hold. Thus, neither of our pruning conditions $v_l.a < v_r.a < x.a$ nor $x.a < v_l.a < v_r.a$ holds. Therefore, for any $v_l \in C_l$, $v_r$ cannot affect its pruning. Similarly, for any $v_r \in C_r$, $v_l$ cannot affect its pruning.
\end{proof}
}

\comment{
\subsection{Proof of Lemma~\ref{lemma:prune-order}}
\begin{proof}
Since $c_i,c_j\in C$ have $v.a<x.a$, the inequality $|c_i.a-x.a|<|c_j.a-x.a|$ implies $c_i.a>c_j.a$ (points closer to $x$ on the left have larger attribute values).

Algorithm~\ref{alg:RAP} prunes within each side and only allows a witness to prune $(x,c_i)$ if it lies \emph{between} $c_i$ and $x$ in attribute order on that side, i.e., $c_i.a<c_j.a<x.a$ on the left. But we have $c_j.a<c_i.a$, so this betweenness condition fails. Therefore $(x,c_i)$ cannot be pruned by $c_j$.

For the right side ($v.a>x.a$), if $|c_i.a-x.a|<|c_j.a-x.a|$ then $c_i.a<c_j.a$, while pruning would require $x.a<c_j.a<c_i.a$, which is impossible. Hence the statement holds symmetrically.
\end{proof}
}

\subsection{Proof of Theorem~\ref{thm:RAP-gets-RRNG}}
\begin{proof}
Fix a center node \(x\). By Lemma~\ref{lemma:prune-divide}, candidates on the left and right of \(x.a\) can be considered independently, so it is sufficient to prove the claim for one side. Consider the right side; the left side is symmetric. The candidates are processed in increasing attribute gap from \(x\), so any valid attribute-between witness for a candidate \(y\) must have been processed earlier.

We prove by induction over this order that the retained set \(R\) maintained by Algorithm~\ref{alg:RAP} is exactly the RRNG neighbor set of \(x\) among the candidates processed so far. The first candidate on the side has no attribute-between witness, so it is retained by both Algorithm~\ref{alg:RAP} and the RRNG rule. For the induction step, consider the current candidate \(y\). If Algorithm~\ref{alg:RAP} prunes \(y\), then there is an already retained candidate \(z\in R\) such that
\[
\delta(x,z)<\delta(x,y),\qquad \delta(y,z)<\delta(x,y),
\]
and \(x.a<z.a<y.a\). By the induction hypothesis, \((x,z)\) is an RRNG edge, so \(z\) is a valid RRNG witness and \((x,y)\) is absent in RRNG. Conversely, suppose Algorithm~\ref{alg:RAP} keeps \(y\). Any RRNG witness \(z\) for pruning \((x,y)\) must lie between \(x\) and \(y\) in attribute order and therefore must have been processed earlier. By the induction hypothesis, such a witness would already be in \(R\), and Algorithm~\ref{alg:RAP} would have pruned \(y\), a contradiction. Hence no RRNG witness exists and \((x,y)\) is retained in RRNG. Applying the same argument to both sides proves that Algorithm~\ref{alg:RAP} with \(C=D\) and \(m=\infty\) yields exactly the RRNG.
\end{proof}

\comment{
\subsection{Proof of Theorem~\ref{thm:prunning-complexity}}
\begin{proof}
Following the algorithm's procedure, we construct the RRNG graph using algorithm \ref{alg:rnsg}, which requires pruning $n$ points. For each point, we examine a total of $n$ candidates. For each candidate, we scan the current set $R$ once. 
Since the algorithm continuously adds points to $R$ until it becomes the final neighborhood set, the pruning process satisfies $|R|\leq \textit{deg}(v)$.
Therefore, the time complexity of performing pruning on a single vertex is $O(\textit{deg}(v)n)$, and our overall time complexity can be derived as $O(n\sum{\textit{deg}(v)}) = O(\Delta n^2)$.
\end{proof}
}

\subsection{Entry Node Generation Details}\label{app:entry-demo}
\begin{algorithm}[t]
\setcounter{AlgoLine}{0}
\LinesNumbered
\caption{Entry Nodes Generation}
\label{alg:entry}
\small

\KwIn{Node set $V$}
\KwOut{A list of entry node sets $T$}

\SetKwFunction{GenEntry}{GenEntry}
\SetKwProg{Fn}{Function}{}{}

\Fn{\GenEntry{$V$}}{
    Calculate the centroid $c$ of the nodes in $V$\;
    Initialize $T \leftarrow \varnothing$\;
    Initialize $q \leftarrow \varnothing$\;
    Sort $V$ by nondecreasing attribute values to obtain $\langle v_1, v_2, \dots, v_n \rangle$\;

    \For{$i \leftarrow 1$ \KwTo $n$}{
        Remove all node $x$ in $q$ such that $\delta(x,c)>\delta(v_i,c)$\;
        $q \rightarrow q \cup \{v_i\}$\;
        $T \rightarrow T \cup \{q\}$\;
    }

    \Return $T$\;
}
\end{algorithm}

Algorithm~\ref{alg:entry} implements the header entry policy used in Section~\ref{subsec:query-pro-rnsg}. When processing $v_i$, any earlier node in $q$ that is farther from the centroid $c$ than $v_i$ can be removed, because $v_i$ is at least as good under the centroid-entry proxy for every later right endpoint. The current $q$ is then stored for $v_i$.

\begin{figure}[t]
  \centering
  \includegraphics[width=0.50\linewidth]{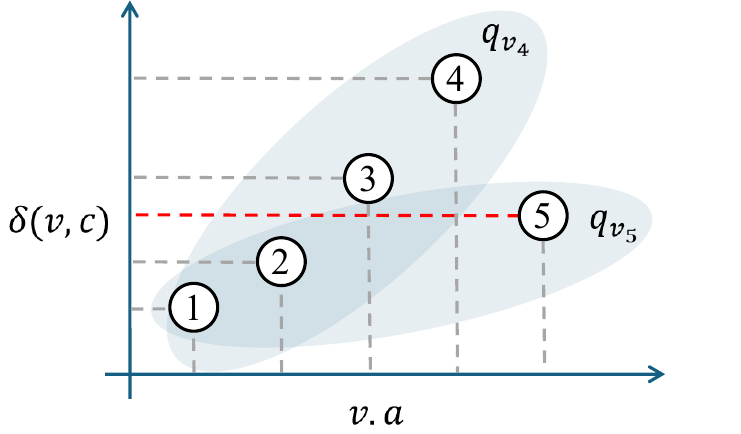}
  \caption{Illustration of entry node set generation.}
  \label{fig:entry-demo}
\end{figure}

\begin{example}
In Fig.~\ref{fig:entry-demo}, the horizontal axis denotes the attribute $v.a$ and the vertical axis is the distance $\delta(v,c)$ from each node to the global centroid $c$.
At $v_4$, the entry set is $q_{v_4}=\{v_1,v_2,v_3,v_4\}$.
When Algorithm~\ref{alg:entry} processes $v_5$, its distance satisfies $\delta(v_5,c)<\delta(v_3,c)$ and $\delta(v_5,c)<\delta(v_4,c)$, so the algorithm removes $v_3$ and $v_4$ and obtains $q_{v_5}=\{v_1,v_2,v_5\}$.
For any query whose right endpoint is $v_5.a$, $v_5$ is a better entry candidate than $v_3$ and $v_4$ under the centroid-distance proxy, hence both $v_3$ and $v_4$ are no longer needed.
\end{example}

\comment{
\subsection{Proof of Lemma~\ref{lem:selection rate}}
\begin{proof}
    
\end{proof}
}

\subsection{Proof of Theorem~\ref{thm:rnsg-induce}}
\begin{proof}
Let \(G=(V,E)\) be the RNSG built on \(D\). For every node \(x\), let \(C_x\) be the candidate set generated by Algorithm~\ref{alg:rnsg} before Algorithm~\ref{alg:RAP} is applied. For interval \(I=[a_l,a_r]\), let \(V_I=\{x\in V:x.a\in I\}\), and let \(H[I]\) be the graph obtained by applying Algorithm~\ref{alg:RAP} to every \(x\in V_I\) using the restricted candidate set \(C_x\cap V_I\) and the same degree bound \(m\).

It remains to prove that pruning commutes with this restriction. Algorithm~\ref{alg:RAP} splits \(C_x\) into the left and right sides of \(x.a\), which are independent by Lemma~\ref{lemma:prune-divide}. Consider the right side; the left side is symmetric. The right-side candidates are processed in nondecreasing attribute gap from \(x\). For any in-range right-side candidate \(y\), every candidate processed before \(y\) has attribute in \((x.a,y.a)\), and is therefore also in \(V_I\). Hence the prefix of candidates that can affect the decision on \(y\) is identical in the full run and in the run restricted to \(V_I\).

The geometric witness test is also identical. If a kept candidate \(w\) prunes \((x,y)\), then the RRNG witness condition requires \(x.a<w.a<y.a\). Because \(x,y\in V_I\) and \(I\) is an interval, \(w\in V_I\). Conversely, every in-range witness available in the restricted run is also available in the full run. Therefore each in-range right-side candidate receives the same prune-or-keep decision in both runs. The degree bound \(m/2\) does not break the argument: before any in-range candidate is processed, all previously processed same-side candidates are in range, so the budget is filled at exactly the same point in the full and restricted runs. After the first out-of-range right-side candidate is processed, no later right-side candidate can be in range. Thus the retained right-side set, restricted to \(V_I\), equals the retained right-side set produced inside \(V_I\). The same holds for the left side.

Finally, Algorithm~\ref{alg:RAP} merges the two sides. Therefore every \(x\in V_I\) has the same neighbor set in the induced subgraph \(G[I]\) as in \(H[I]\). Hence \(G[I]=H[I]\).

\end{proof}

\subsection{Proof of Lemma~\ref{lem:entrysize}}
\begin{proof}

Let $V=\langle v_1,\dots,v_n\rangle$ be sorted in non-decreasing attribute order as in Algorithm~\ref{alg:entry}, and fix a right endpoint $a_r$. Let $v_r$ be the node with the largest attribute value not exceeding $a_r$. For each $t\le r$, define the query range $I_t=[v_t.a,a_r]$ and let $c$ denote the centroid of $V$, with $dis_i \triangleq \delta(v_i,c)$.

Under the independence assumption between embeddings and attributes, the sequence $(dis_1,\dots,dis_r)$ consists of independent and identically distributed (i.i.d.) random variables (ties occur with probability zero or are broken arbitrarily). The entry node for range $I_t$ is defined as: \[
e_t \;\triangleq \; \arg\min_{1\le s\le t} dis_s.
\]

As $t$ increases from $1$ to $r$, $e_t$ changes precisely when $dis_t < \min_{s<t} dis_s$, i.e., when $dis_t$ is a new record minimum.

For i.i.d. continuous random variables, the probability that $dis_t$ is a new record minimum is $1/t$. Therefore, by linearity of expectation:
\[
\mathbb{E}\!\left[\#\{e_t:\,1\le t\le r\}\right]
=\sum_{t=1}^{r}\frac{1}{t}
=H_r
\le 1+\ln r
=O(\log n).
\]
This completes the proof.
\end{proof}

\begin{figure}
  \centering
  \includegraphics[width=0.95\linewidth]{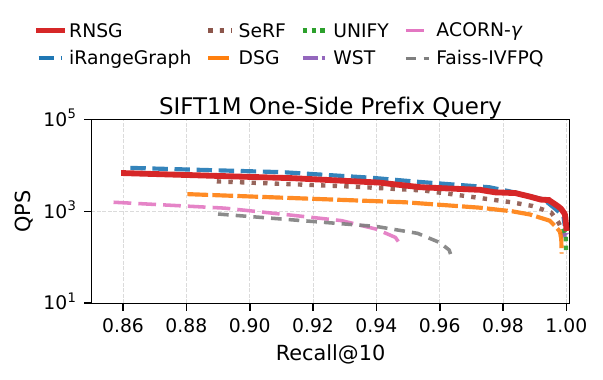}
  \caption{One-sided prefix-range query experiment on SIFT1M.}
  \label{fig:sift-one-side}
\end{figure}

\begin{figure}
  \centering
  \includegraphics[width=\linewidth]{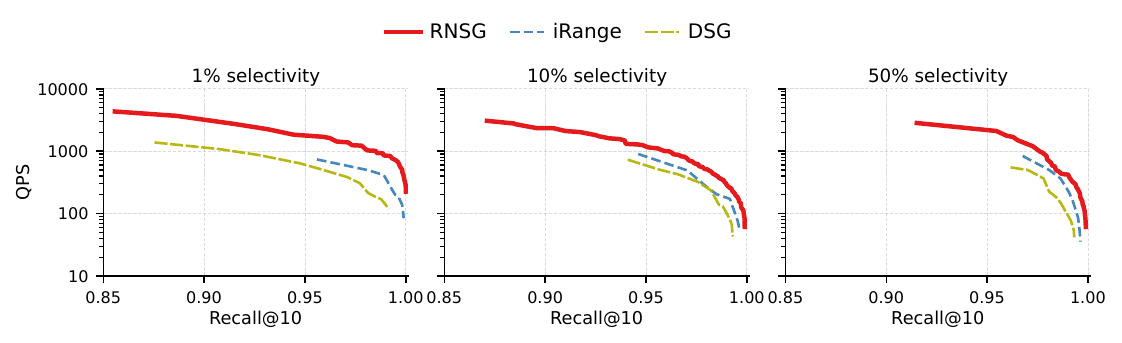}
  \caption{Large-scale recall-QPS curves on SpaceV10M. RNSG is compared with iRangeGraph and DSG under 1\%, 10\%, and 50\% selectivity workloads.}
  \label{fig:spacev10m-large-scale}
\end{figure}

\subsection{Exp-9: One-sided prefix-range query experimen.}
Some baselines may be especially favorable under one-sided interval predicates, so Fig.~\ref{fig:sift-one-side} separately evaluates prefix-range queries on SIFT1M. Under this special query form, SeRF becomes much stronger than in the standard interval workloads, and iRangeGraph retains a clear advantage at low and medium recall. RNSG does not use a one-sided query-specific optimization, but still gives the second-best mid-recall trade-off after iRangeGraph and remains competitive at high recall, reaching 0.984 recall at 2507 QPS and 0.9995 recall at 882 QPS, which is $0.96\times$ and $1.79\times$ of iRangeGraph QPS, respectively. This confirms that RNSG can handle one-sided range predicates effectively while retaining its general interval-query design.

\subsection{Exp-10: Comparing with baseline on SpaceV10M}
Fig.~\ref{fig:spacev10m-large-scale} evaluates whether the single-graph design remains effective beyond the 1M-object benchmarks. We compare RNSG with iRangeGraph and DSG on SpaceV10M using 1K recall@10 queries and the same 1\%, 10\%, and 50\% selectivity levels used in the main high-selectivity study. Each method uses an explicit construction profile and then sweeps query-side search effort to form recall-QPS curves: RNSG uses KNNG degree $128$, $ef_\mathrm{attribute}=16384$, and $m=400$; iRangeGraph uses $M=40$ and $ef_\mathrm{construction}=100$; DSG uses $M=16$, $ef_\mathrm{max}=1000$, and $ef_\mathrm{construction}=160$.

The large-scale curves preserve the main trend. At recall at least 0.98, RNSG reaches 1028.74, 505.30, and 653.81 QPS for 1\%, 10\%, and 50\% selectivity, respectively. These points are $2.08\times$, $2.48\times$, and $1.32\times$ faster than iRangeGraph, and $4.82\times$, $2.16\times$, and $2.91\times$ faster than DSG. At recall at least 0.99, RNSG remains $3.50\times$, $1.61\times$, and $1.88\times$ faster than iRangeGraph, and $7.10\times$, $3.18\times$, and $4.47\times$ faster than DSG. The deployed RNSG index is 4.20 GiB, compared with 21.20 GiB for iRangeGraph and 59.22 GiB for DSG. Its full construction time is 2283.31 seconds, reported as 1945.43 seconds for KNNG plus 337.88 seconds for the RNSG-specific build stage, compared with 16955.00 seconds for iRangeGraph and 11346.82 seconds for DSG.
\fi
\end{document}